\newtheorem{thm}{Theorem}
\newcommand{\red}[1]{\textcolor{red}{#1}}
\begin{document}

%\preprint{APS/123-QED}

\title{Exact solutions, spectrum properties, and  hierarchical structures of the multiple temperature model }% Force line breaks with \\
%\thanks{A footnote to the article title}%

\author{Hiroki Katow}
% \altaffiliation[Also at ]{Physics Department, XYZ University.}%Lines break automatically or can be forced with \\
%\author{Second Author}%
\email{hkatow@atto.t.u-tokyo.ac.jp}
\affiliation{%
 Photon Science Center, Graduate School of Engineering, The University of Tokyo, 7-3-1 Hongo, Bunkyo-ku, Tokyo 113-8656, Japan
}

\author{Kenichi L. Ishikawa}
\email{ishiken@n.t.u-tokyo.ac.jp}
\affiliation{%
 Photon Science Center, Graduate School of Engineering, The University of Tokyo, 7-3-1 Hongo, Bunkyo-ku, Tokyo 113-8656, Japan
}
\affiliation{%
 Department of Nuclear Engineering and Management, Graduate School of Engineering, The University of Tokyo,7-3-1 Hongo, Bunkyo-ku, Tokyo 113-8656, Japan
}%
\affiliation{%
 Research Institute for Photon Science and Laser Technology\, The University of Tokyo, 7-3-1 Hongo, Bunkyo-ku, Tokyo 113-0033, Japan
}
%
%\collaboration{MUSO Collaboration}%\noaffiliation
% \homepage{http://www.Second.institution.edu/~Charlie.Author}
%\affiliation{
% Third institution, the second for Charlie Author
%}%
%\author{Delta Author}
%\affiliation{%
% Authors' institution and/or address\\
% This line break forced with \textbackslash\textbackslash
%}%
%
%\collaboration{CLEO Collaboration}%\noaffiliation

\date{\today}% It is always \today, today,
             %  but any date may be explicitly specified

\begin{abstract}
Recent developments of ultrafast laser pulse techniques enable us to study the subpicosecond scale dynamics out of thermal equilibrium.
Multiple temperature models are frequently used to describe such  {dynamics} where the total system is divided into subsystems  {each of which is} in local thermal equilibrium.
 {Typical examples include}
the electron-lattice two temperature model and electron-spin-phonon three temperature model.
We  {present the exact analytical solutions of linear multiple temperature model  {(MTM)}, based on the Fourier series expansion, and discuss their properties for the case of the two and three temperature models.} 
 {We show that the general solution of MTM is expressed as a linear combinations of a spatially uniform, single-temperature stationary mode and the other non-oscillatory, decaying ``eigenmodes" characterized by different wave vectors and well-defined mode lifetimes.}
The eigenmode picture enables us to  {explore} the hierarchical structure of models with  {respect} to space, time and the coupling parameter.
We also find diffusion modes  {unique to} the three temperature model which unveils the rich physics in spite of the simplicity of the model. 
% We prove that the eigensystem in this model is non-positive definite,
% which assures that the mode lifetime is always well-defined. 
% The strong limitation of the mode dynamics, which prohibits oscillatory or more complex behaviors, is a model characteristic.
 {Furthermore, we prove that the general linear multiple temperature model fulfills physical requirements such as energy conservation and convergence to a spatially uniform, single-temperature steady state with well-defined mode lifetimes.}
\end{abstract}

%\pacs{Valid PACS appear here}% PACS, the Physics and Astronomy
                             % Classification Scheme.
%\keywords{Suggested keywords}%Use showkeys class option if keyword
                              %display desired
\maketitle
\section{Introduction}
 The nonequilibrium description of the condensed matter systems has remained  {a subject of strong interest} for decades in physics.
 Leaving from well-defined thermodynamical equilibrium states, a possible first step towards  {the description of nonequilibrium dynamics} is to divide the total system into subsystems,  {each} in local thermal equilibrium.
 %as a building block to describe whole dynamics.
 The idea of separating the total system into electronic and lattice subsystems with different temperature dates back to 1950s \cite{KLT57}.
 Early developments of this idea is detailed in a review by Kabanov \cite{K20}.
 A present form of the two temperature model (2TM) can be found in early 1970s\cite{AKP74}.
 A theoretical proposal to measure the electron-phonon coupling strength by pump-probe experiments\cite{A87} followed by observations in superconducting metallic systems\cite{BKMFCID90,CFGLLMYS91} has paved the way to a crucial application of the 2TM.    
 
%APPLICATION to ABLATION
The 2TM is now applied to extreme conditions where melting, evaporation, and material removal occur by ultrafast laser excitation\cite{CMNvAT96,NMJTCWW97},  {e.g., during the ultrafast laser material processing, for which higher energy efficiency and spatial precision are expected.}
Subpicosecond laser pulse deposits energy on the electronic subsystem in a ultrashort time scale while the lattice temperature remains relatively low.
The fast thermalization process of the electronic system is considered to justify that the electronic and lattice system possess different temperatures $T_e$ and $T_l$ after the laser pulse is turned off.  

The limitation of the 2TM has been recognized early on. 
Its failures of predicting the electron-phonon relaxation time at low temperature and its excitation intensity dependence were pointed out in \cite{GSL92,GSL95}.
Baranov and Kabanov derived a temperature range $\hbar^2\omega_D^2/E_F < k_BT < \hbar\omega_D(E_F/\hbar\omega_D)^{1/3}$ where 2TM cannot be justified\cite{BK14}. $\omega_D$ is the Debye frequency, and $E_F$ is the Fermi energy.
The Boltzmann equation approach is frequently used to improve the description of nonthermal distribution function\cite{FVATCV00,PCLC04,PCLC07,KA08,MR13}.
For the description of material destruction processes, a  {multi-scale} modeling which combines the 2TM and the classical molecular dynamics is employed\cite{IROGVZ08,IKLRCGS13,IBISGR17}. 
A recent review can be referred for this approach\cite{RIGA17}.   

Yet simple but a straightforward extension of the 2TM is dividing the system into smaller subsystems.
Waldecker introduced an idea to generalize the 2TM to the nonthermal lattice model where three phonon branches of Al have their own temperatures \cite{WBE16}.
A similar approach is  {applied to} graphene \cite{BOMMSR20}.
The electron-spin-phonon three temperature model has been developed to explain the ultrafast demagnetization process \cite{BMDB96,KKWHAC14,ZJWS21}, sometimes in combination with a microscopic equation of motion\cite{KKWHAC14,ZJWS21}.

In this paper we present exact solutions of the linear multiple temperature model (MTM) whose coefficients are all constant. 
Under the condition of vanishing heat flow of each subsystem at the boundaries, the model can be diagonalized.
The system dynamics can then be described by a linear combination of damping eigenmodes.
Each eigenmode is characterized by the mode lifetime which depends on the wave vector $\mathbf{q}$.
We firstly discuss the 2TM.
The exact solution of the 2TM splits into two eigenmodes whose eigenvalues form two branches $\zeta_\pm(\mathbf{q})$ on $\mathbf{q}$ space.
We will see that the $\zeta_+(\mathbf{q})$ branch is smoothly connected to the solution of an effective one temperature model (1TM) in a small $\mathbf{q}$ limit.
While in an opposite, large $\mathbf{q}$ limit, $\zeta_\pm(\mathbf{q})$ branches converge to the free diffusion modes of electron and lattice temperatures where the electron-lattice coupling $G$ becomes negligible.
The eigenmode picture thus enables us to explore the spatial scale dependence of the model behaviors. 
The result of a case for gold highlights this point.
we then provide the exact solution of linear three temperature model (3TM) which consists of three subsystems.
The additional degree of freedom leads to an emergence of a special solution which does not have any counter part to the 2TM solutions.
This solution can physically be interpreted as pure phonon-phonon, or spin-phonon diffusion modes where the amplitude of electron temperature is completely suppressed.
We also investigate a ``weak coupling limit" $G_{12}, G_{13}\ll G_{23}$ of 3TM, where $G_{12}$, $G_{13}$, and $G_{23}$ are coupling parameters between subsystems.
We show the 3TM can be approximated by an effective 2TM in the weak coupling limit combined with small $\mathbf{q}$ limit.
This result clarifies a hierarchical structure of the MTM. 
The 3TM includes 2TM, and 2TM includes 1TM in appropriate limits of spacial, time and parameter scale.
We finally derive a series of theorems which strongly restrict the eigenvalue properties of the MTM.
According to the theorems the MTM eigenvalue is always non-positive real valued, which assures the mode lifetime is always well defined.
A stationary solution is also guaranteed to present only in $\mathbf{q}=0$ point.
The results of this paper will serve as a foundation of advanced models, {\it e.g.}, with non-linearity or spatial non-uniformity.

This paper is organized as follows.
In Sec.~\ref{2TM} we introduce the 2TM and derive its exact solutions.
We will examine the 2TM behaviors for the case of bulk gold.
In Sec.~\ref{3TM} we introduce the 3TM and derive its exact solutions.
The ``band" structure of eigenvalues on three-dimensional parameter space helps us to grasp an overview of solutions.
The derivation of effective 2TM from the 3TM is also discussed.
In Sec.~\ref{theorem} we derive a series of theorems which provide strong and physically reasonable limitations on the MTM eigenvalue properties.

\section{Linear two temperature model}
\label{2TM}
The linear two-temperature model is defined as,
\begin{eqnarray}
&&
\begin{bmatrix}
C_e && 0 \\
0 && C_l \\
\end{bmatrix}
\frac{\partial}{\partial t}
\begin{bmatrix}
T_e (t,\mathbf{r}) \\
T_l (t,\mathbf{r}) \\
\end{bmatrix}
\nonumber
\\ 
&&=
\left\{
\begin{bmatrix}
\kappa_e\nabla^2 & 0\\
0 & \kappa_l\nabla^2\\
\end{bmatrix}
+
\begin{bmatrix}
-G & G\\
G & -G\\
\end{bmatrix}
\right\}
\begin{bmatrix}
T_e (t,\mathbf{r}) \\
T_l (t,\mathbf{r})\\
\end{bmatrix}.
\label{eq:2TM}
\end{eqnarray}
 {Here $T_e (t,\mathbf{r})$ and $T_l (t,\mathbf{r})$ are the electron and lattice temperature at position $\mathbf{r}$ and time $t$.} 
$C_e$ ($C_l$) and $\kappa_e$ ($\kappa_l$) denote the heat capacity and the thermal diffusion coefficient  of the electronic (lattice) subsystem, respectively, and
$G$ the electron-lattice coupling constant.
Parameters $C_e$, $C_l$, $\kappa_e$, $\kappa_l$, and $G$ are all positive real valued.
Throughout this paper we 
 {assume that the system is rectangular shaped whose side lengths are given by $L_i$ ($i=x,y,z$), and}
use a boundary condition
\begin{eqnarray}
 \partial_{r_i} T_e = 0,\; \partial_{r_i}  T_l=0  {\rm \; for\;}r_i=0,L_i (i=x,y,z).
 \label{eq:BC}
\end{eqnarray}
Clearly $\prod_i\cos (q_{n_i}r_i)$ $(i=x,y,z)$ is an eigenfunction of the diffusion term, where the wave vector is defined by $\mathbf{q}_{n_xn_yn_z}=(q_{n_x},q_{n_y},q_{n_z})=(\pi n_x/L_x,\pi n_y/L_y,\pi n_z/L_z)$, with
$n_i$ $(i=x,y,z)$ being a non-negative integer.
For simplicity we omit the subscript from the wave vector hereafter. 
 {Thus, the general solution of Eq.~(\ref{eq:2TM}) can be expressed by a linear combination of different wave vector components of the form 
\begin{eqnarray}
\label{eq:FSE}
\begin{bmatrix}
T_e(t,\mathbf{r})\\
T_l(t,\mathbf{r})
\end{bmatrix}
=\sum_{\mathbf{q}}%(n_x,n_y,n_z)} 
\begin{bmatrix}
A_{\mathbf{q}}\\
B_{\mathbf{q}}\\
\end{bmatrix}
\prod_i\cos(q_ir_i)e^{\zeta({\bf q})t},
\end{eqnarray}
as a natural extension of the Fourier series expansion common in the studies of thermal diffusion \cite{NMJTCWW97,OIINNH16,YA20}.
Because of the spatial uniformity, each $\mathbf{q}$ component is independent.
Then, the coefficients $\begin{bmatrix}
A_{\mathbf{q}}\\
B_{\mathbf{q}}\\
\end{bmatrix}$ and $\zeta ({\bf q})$ are the eigenvectors and eigenvalues, respectively, of}
a $2\times2$ non-symmetric matrix,
\begin{eqnarray}
H' =
\begin{bmatrix}
\omega_e(\mathbf{q}) -\Omega_e& \Omega_e\\
\Omega_l & \omega_l(\mathbf{q}) -\Omega_l\\
\end{bmatrix}
\end{eqnarray}
where $\omega_e(\mathbf{q})=-\frac{\kappa_e}{C_e}q^2$, $\omega_l(\mathbf{q})=-\frac{\kappa_l}{C_l}q^2$, with $q=|{\bf q}|$, and $\Omega_e=G/C_e,\; \Omega_l=G/C_l$.
 {It is interesting to notice that the analytical form of $H'$ is analogous to the Hamiltonian of other physical systems such as the quantum Rabi model and the polariton model except $H'$ is not symmetric}. 
The eigenvalue $\zeta$ of $H'$ splits into the upper and lower branches $\zeta(\mathbf{q})=\zeta_+(\mathbf{q})$ and $\zeta_-(\mathbf{q})$, respectively: 
\begin{eqnarray}
\zeta_\pm(\mathbf{q})=\Delta_+(\mathbf{q})/2\pm\sqrt{\{\Delta_-(\mathbf{q})/2\}^2+\Omega_e\Omega_l}
\label{eq:zeta_2TM}
\end{eqnarray}
where
\begin{eqnarray}
\Delta_\pm(\mathbf{q}) = \{\omega_l(\mathbf{q})-\Omega_l\}\pm\{\omega_e(\mathbf{q})-\Omega_e\}.
\end{eqnarray}
The corresponding right eigenmode (eigenvector) is given by  
\begin{eqnarray}
\mathbf{v}^R_{\zeta\mathbf{q}}(t,\mathbf{r})=
\begin{bmatrix}
v^e_{\zeta\mathbf{q}}(t,\mathbf{r})\\
v^l_{\zeta\mathbf{q}}(t,\mathbf{r})
\end{bmatrix}
=
A_{\zeta\mathbf{q}}
\begin{bmatrix}
1\\
R_{\zeta\mathbf{q}}/\Omega_e
\end{bmatrix}
u_{\zeta\mathbf{q}}(t,\mathbf{r})
\label{eq:EV_2TM}
\end{eqnarray}
where
\begin{eqnarray}
&&u_{\zeta\mathbf{q}}(t,\mathbf{r}) = \prod_i\cos(q_ir_i)e^{\zeta(\mathbf{q})t}\\
&& R_{\zeta\mathbf{q}}=\Delta_-(\mathbf{q})/2\pm\sqrt{\{\Delta_-(\mathbf{q})/2\}^2+\Omega_e\Omega_l}.
\end{eqnarray}
 {The general solution Eq.~\eqref{eq:FSE} of Eq.~(\ref{eq:2TM}) is given by,
\begin{equation}
    \begin{bmatrix}
    T_e(t,\mathbf{r})\\
    T_l(t,\mathbf{r})
    \end{bmatrix}
    =\sum_{\mathbf{q}\,(n_x,n_y,n_z)}
    \left[\mathbf{v}^R_{\zeta_+\mathbf{q}}(t,\mathbf{r})+\mathbf{v}^R_{\zeta_-\mathbf{q}}(t,\mathbf{r})\right].
\end{equation}
}
 {Since the linear temperature model assumes a  spatially uniform system, different wave vector components do not couple with each other.}
The mode amplitude $A_{\zeta\mathbf{q}}$ is determined by the initial condition.

 {It follows from $\omega_e({\bf q}=0)=\omega_l({\bf q}=0)=0$ that, 
\begin{equation}
\label{eq:zeta+0}
    \zeta_+({\bf q}=0) = 0,
\end{equation}
and,
\begin{equation}
\label{eq:zeta-0}
    \zeta_-({\bf q}=0) = -(\Omega_e+\Omega_l)<0.
\end{equation}
One can show that $\partial_{\omega_e}\zeta_\pm > 0$ and $\partial_{\omega_l}\zeta_\pm > 0$, therefore, $\zeta_\pm$ monotonically decreases with increasing $q$ [see Fig.~\ref{fig:spectrum}(a) below].
Furthermore, $\zeta_+({\bf q}=0) = 0$ [Eq.~\eqref{eq:zeta+0}], and, otherwise, $\zeta_\pm < 0$, indicating that all the modes damp except for ${\bf v}^R_{\zeta_+0}$, which corresponds to the final state; the larger the wave number, the faster the mode damps on each branch.
It should also be noted that each individual eigenmode except for ${\bf v}^R_{\zeta_+0}$ cannot be a physical solution alone, since it spatially oscillates around zero.
The general solution must be a superposition of two or more modes to ensure non-negative temperature everywhere in the system.
}

Asymptotic behaviors of the solution in small and large $\bf{q}$ limit are informative to see the nature of this model.
For $\mathbf{q}\rightarrow 0$ limit,
\begin{eqnarray}
\zeta_+(\mathbf{q})\rightarrow -\frac{\kappa_e+\kappa_l}{C_e+C_l}q^2
\label{eq:upper_l1small}
\end{eqnarray}
\begin{eqnarray}
R_{\zeta_+\mathbf{q}}/\Omega_e \rightarrow 1-\frac{1}{\Omega_e+\Omega_l}(\kappa_l/C_l-\kappa_e/C_e)q^2 
\label{eq:upper_R1small}
\end{eqnarray}
and 
\begin{eqnarray}
\zeta_-(\mathbf{q})\rightarrow -(\Omega_e+\Omega_l)
-\frac{1}{C_e+C_l}\left( \frac{C_l}{C_e} \kappa_e+\frac{C_e}{C_l}\kappa_l\right)q^2\nonumber \\
\label{eq:lower_l1small}
\end{eqnarray}
\begin{eqnarray}
R_{\zeta_-{\mathbf q}}/\Omega_e\rightarrow -\frac{C_e}{C_l}+\frac{C_e}{C_l}\frac{1}{\Omega_e+\Omega_l}\left(\frac{\kappa_l}{C_l}-\frac{\kappa_e}{C_e}\right)q^2.
\label{lower_R1small}
\end{eqnarray}
We see that the upper branch $\zeta_+$ reduces to an effective ``one temperature model" with the effective heat capacity $C_{\rm eff}=C_e+C_l$ and the effective thermal diffusion coefficient $\kappa_{\rm eff}=\kappa_e+\kappa_l$. 
Up to the $q^2$ order, only the relative amplitude Eq.~(\ref{eq:upper_R1small})  {between the lattice and electronic systems} provides the information of the electron-lattice coupling $G$. 
Contrary to the upper branch, the lifetime of the lower branch $-1/\zeta_-(\mathbf{q}=0)=1/(\Omega_e+\Omega_l)$ enables us to determine the value of $G$ in its leading term. 

Next we examine the large $\mathbf{q}$ limit, which corresponds to $\omega_e(\mathbf{q}),\omega_l({\mathbf{q}})\gg \Omega_e,\Omega_l$.
For the upper branch:
\begin{eqnarray}
\zeta_+(\mathbf{q})\rightarrow -\min \left(\frac{\kappa_e}{C_e}, \frac{\kappa_l}{C_l}\right)q^2
\label{eq:upper_l1large}
\end{eqnarray}
\begin{eqnarray}
R_{\zeta_+\mathbf{q}}/\Omega_e \rightarrow 
\left\{
\begin{array}{l}
\left(\frac{\kappa_l}{C_l}-\frac{\kappa_e}{C_e}\right)q^2 \;{\rm for \;}\frac{\kappa_e}{C_e}< \frac{\kappa_l}{C_l} \\ 
0  \;{\rm for \;}\frac{\kappa_e}{C_e}> \frac{\kappa_l}{C_l}
\end{array}
\right.
\label{eq:upper_R1large}
\end{eqnarray}
and for the lower branch:
\begin{eqnarray}
\zeta_-(\mathbf{q})\rightarrow -\max \left(\frac{\kappa_e}{C_e}, \frac{\kappa_l}{C_l}\right)q^2
\label{eq:lower_l1large}
\end{eqnarray}
\begin{eqnarray}
R_{\zeta_-\mathbf{q}}/\Omega_e \rightarrow 
\left\{
\begin{array}{l}
0  \;{\rm for \;}\frac{\kappa_e}{C_e}< \frac{\kappa_l}{C_l}\\
\left(\frac{\kappa_l}{C_l}-\frac{\kappa_e}{C_e}\right)q^2 \;{\rm for \;}\frac{\kappa_e}{C_e}> \frac{\kappa_l}{C_l} 
\end{array}
\right..
\label{eq:lower_R1large}
\end{eqnarray}
In this limit, the electron-lattice coupling $G$ is negligible, and the system dynamics is dominated by ``free diffusion process".

\begin{figure*}
\begin{tabular}{ccc}
\begin{minipage}{0.325\hsize}
\begin{center}
\begin{overpic}[angle=0,width=5.8cm]{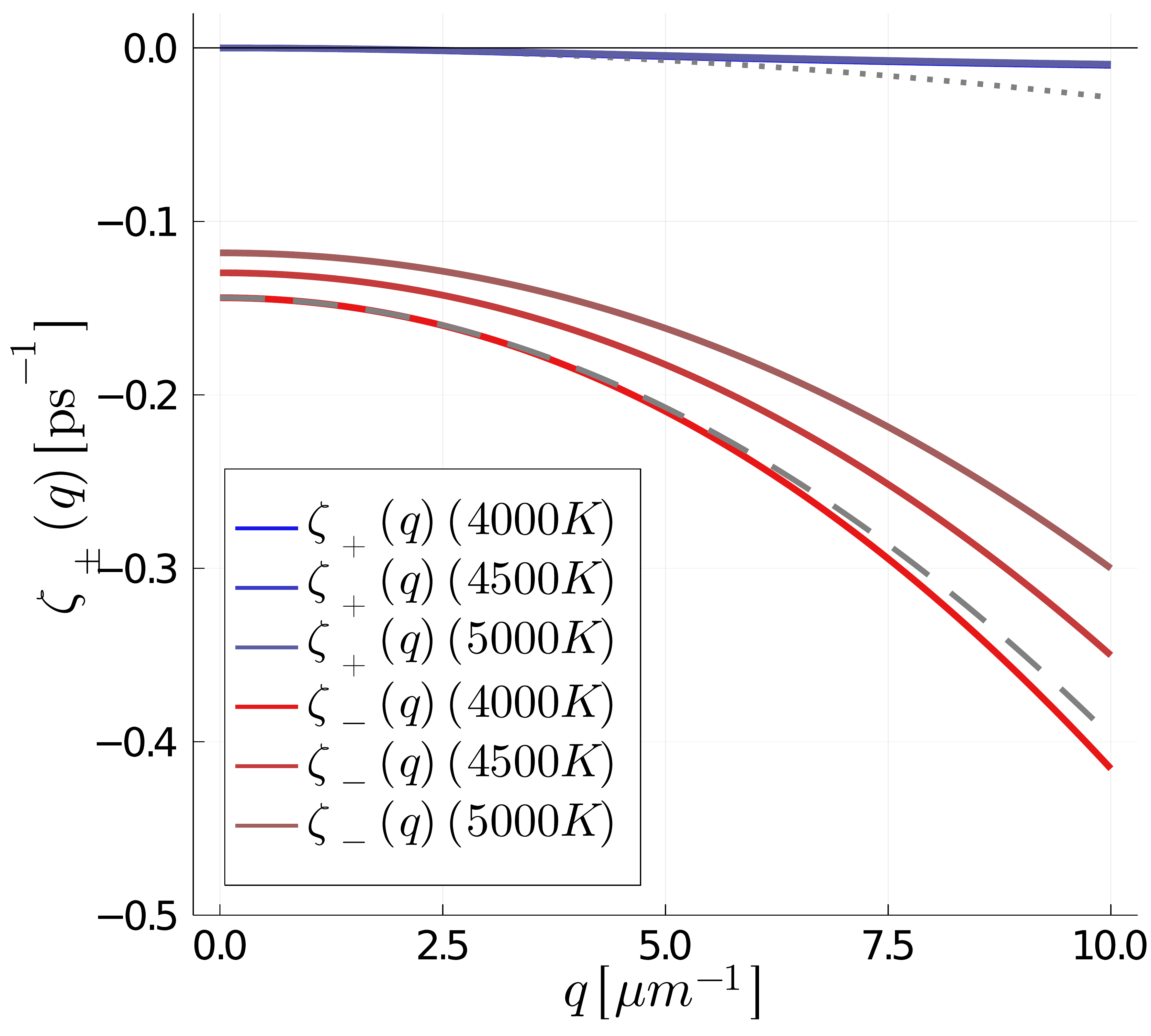}
\put(0,80){(a)}
\end{overpic}
\end{center}
%\vspace{-1mm}
\end{minipage}
\begin{minipage}{0.325\hsize}
\begin{center}
\begin{overpic}[width=5.8cm]{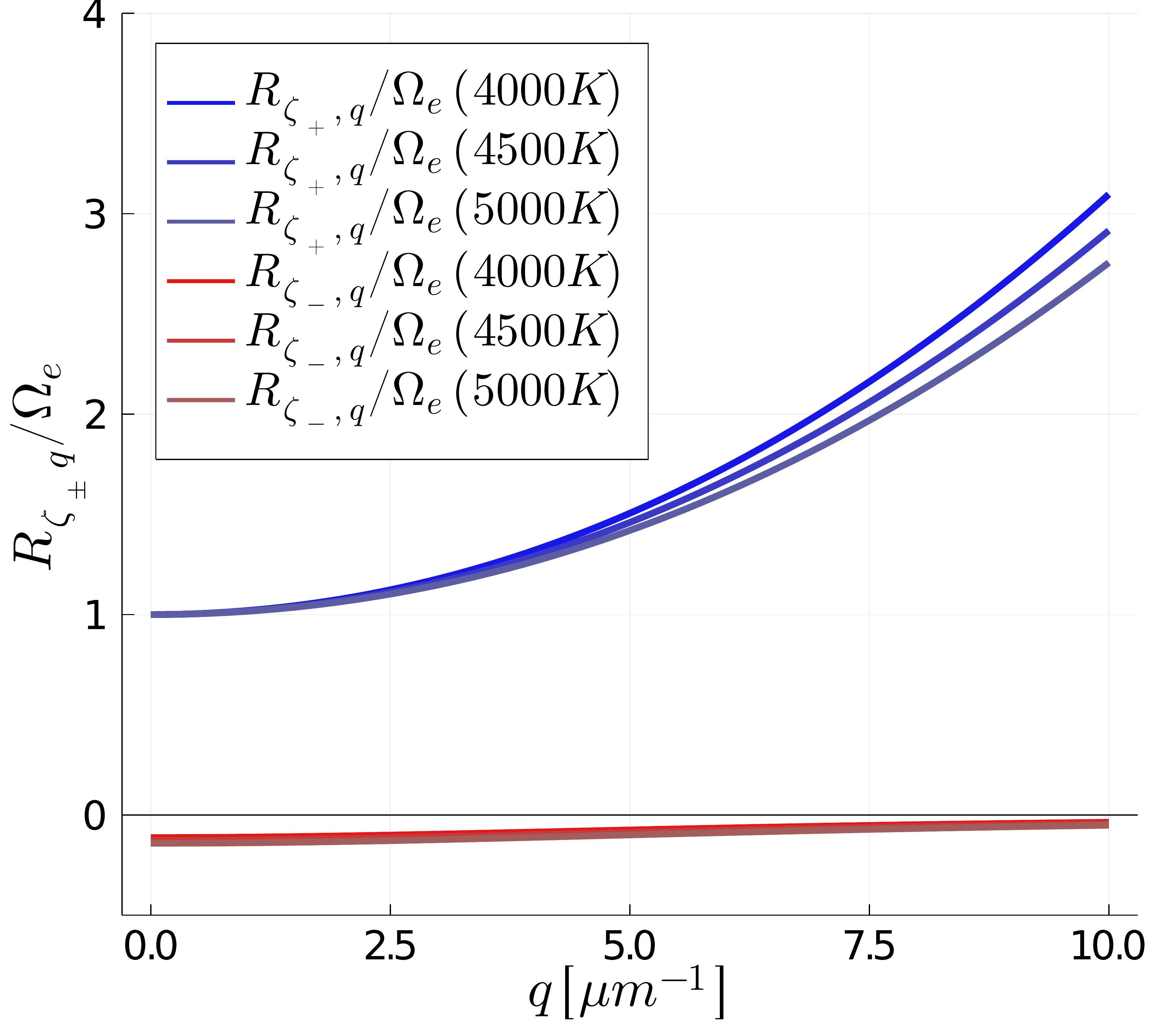}
\put(0,80){(b)}
\end{overpic}
\end{center}
%\vspace{-1mm}
\end{minipage}
\begin{minipage}{0.325\hsize}
\begin{center}
\begin{overpic}[width=5.8cm]{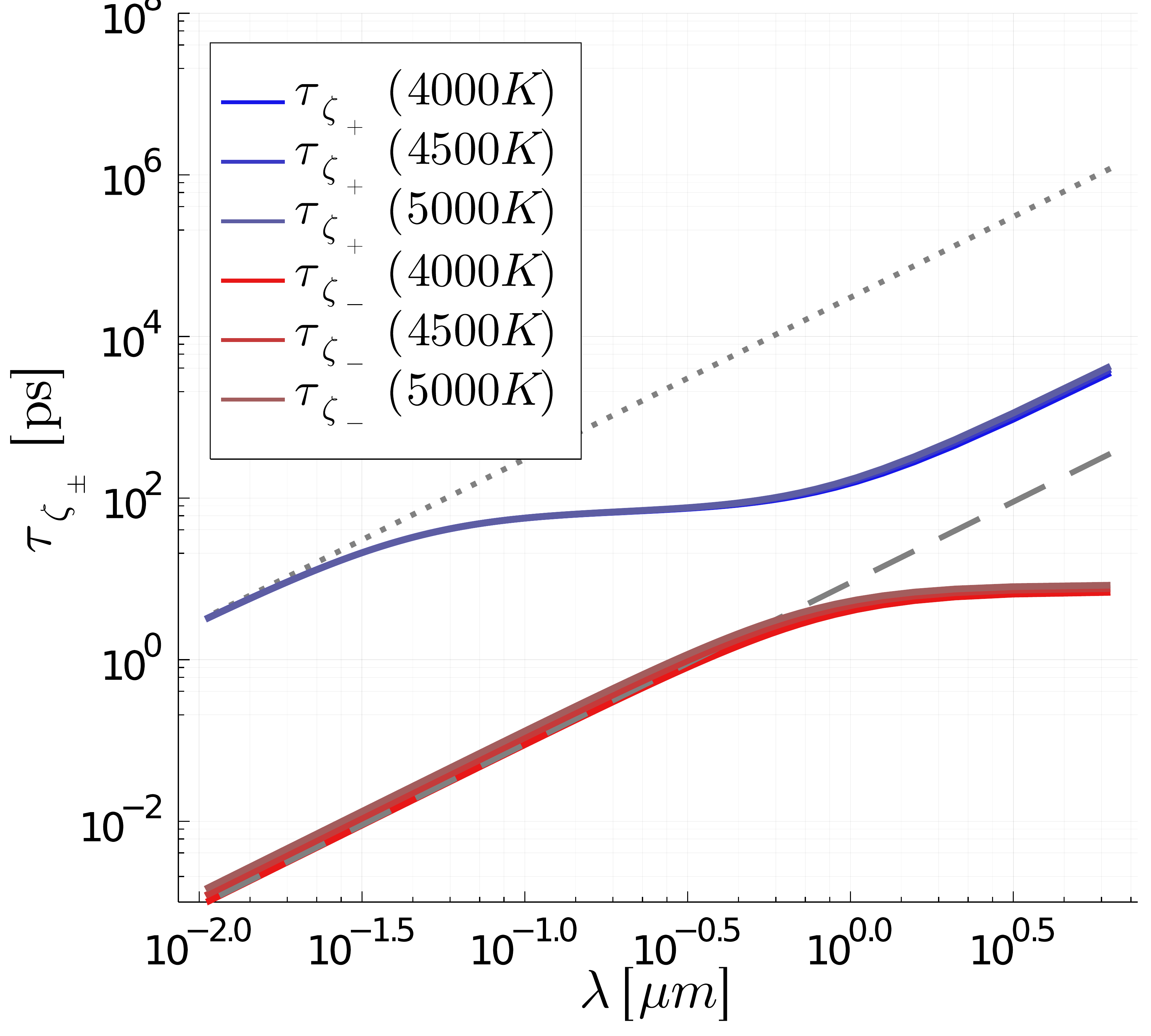}
\put(0,80){(c)}
\end{overpic}
\end{center}
%\vspace{-1mm}
\end{minipage}
\end{tabular}
\caption{Wave vector $q$ dependence of (a) the eigenvalue of two-temperature model Eq.~(\ref{eq:zeta_2TM}), (b) the relative amplitude of lattice temperature given in Eq.~(\ref{eq:EV_2TM}). $\zeta_\pm$ in small $|q|$ limit Eq.~(\ref{eq:upper_l1small}) and Eq.~(\ref{eq:lower_l1small}) for $T_e=4000K$ are shown by dotted and dashed lines, respectively.
(c) Wave length $\lambda=2\pi/q$ dependence of the lifetime $\tau_{\zeta_\pm}=-1/\zeta_\pm$, where small $\lambda$ limit Eq.~(\ref{eq:lower_l1large}) and Eq.~(\ref{eq:upper_l1large})  for $T_e=4000K$ are shown by dotted and dashed lines, respectively.
We used parameters of gold given by \cite{RIGA17,WRLD94, JM16} for fixed electron temperature $T_e=4000,4500,5000K$. } 
\label{fig:spectrum}
\end{figure*}

Now, as a specific example, let us investigate the behaviors of the modes for the case of gold.
We referred the values in literature as $C_e(T_e)=\gamma T_e $ where $\gamma=67.6 \; {\rm J/m^3K^2}$,
\begin{eqnarray}
\kappa_e(T_e,T_l)=\frac{1}{3}v_F^2 C_e(T_e)\frac{1}{AT_e^2+BT_l}
\end{eqnarray}
where $v_F=1.39\times10^6 \;{\rm ms^{-1}}$, $A=1.2\times 10^7 \; {\rm s^{-1}K^{-2}}$, $B=1.23\times10^{11} {\rm s^{-1}K^{-1}}$, and $G=3.5\times10^{16} \;{\rm J/m^3Ks}$ from \cite{RIGA17,WRLD94}.
The lattice heat capacity $C_l=2.4\times10^6$ and $\kappa_l=2 \;{\rm J/mKs}$ are taken from \cite{JM16}.
 {Although $C_e$ and $\kappa_e$ are, strictly speaking, temperature dependent, we focus on the behavior of the linear 2TM here and use the constant $C_e$ and $\kappa_e$ values calculated for $T_e = 4000, 4500, 5000$K.}
These belong to a typical temperature scale in the laser ablation processes\cite{RIGA17}.
In Fig.~\ref{fig:spectrum} (a) we can confirm that all eigenvalues $\zeta_\pm (q)$ are negative real valued, monotonically decreasing with $q$, and hence the lifetime of each mode $\tau_{\zeta_\pm}=-1/\zeta_\pm$ is well defined
 {except for $\mathbf{q}=0$ where $\zeta_+=0$}.
Once the initial condition is given, the system dynamics is completely described by the damping process of each mode.
We find that the asymptotic solution Eq.~(\ref{eq:upper_l1large}) reproduces 87\% of the exact value for wave length $\lambda=2\pi/q=0.5 {\rm \mu m}$ and  Eq.~(\ref{eq:lower_l1large}) gives 101\% for $\lambda=1.0 {\rm \mu m}$.  
The relative amplitude [Fig.~\ref{fig:spectrum} (b)] shows a qualitative difference between the upper and lower branch. 
In the upper branch ($\zeta_+$) the electron and lattice temperatures spatially oscillate in phase, while in the lower, or $\zeta_-$ branch the oscillation is antiphase.
Figure~\ref{fig:spectrum} (b) also shows that in the large $\mathbf{q}$ limit the amplitude of electron (lattice) temperature in the upper (lower) branch vanishes, which indicates a transition to the free diffusion process. 
We can also see this transition in  {Fig.~\ref{fig:spectrum} (c), which plots 
the $\lambda$ dependence of lifetime $\tau_{\zeta_\pm}$};  
the exact solutions Eq.~(\ref{eq:zeta_2TM}) approach to the asymptotic solutions Eqs.~(\ref{eq:upper_l1large}) and (\ref{eq:lower_l1large}) in the small $\lambda$, i.e., large $\mathbf{q}$, limit.
Figure~\ref{fig:spectrum} (c) indicates that such a transition occurs at tenth of nanometer scales in the upper branch and at sub $\mu$m scale in the lower branch.

\section{Linear three temperature model}
\label{3TM}
As a natural extension of the 2TM, the three temperature model (3TM) is defined as follows:
\begin{eqnarray}
\Lambda \frac{\partial}{\partial t}\mathbf{T}=H \mathbf{T}.
\label{eq:3TM}
\end{eqnarray}
Here $\mathbf{T}$ is a three component vector, representing the temperatures of the three subsystems.
$\Lambda$ and $H$ are symmetric $3\times 3$ matrices given by,
\begin{eqnarray}
\Lambda_{ij}=C_i\delta_{ij}
\end{eqnarray}
\begin{eqnarray}
H_{ii}&=&\kappa_i\nabla^2-\sum_{k\neq i}^3G_{ik}\\
H_{ij}&=&H_{ji}=G_{ij} \textrm{ \;for \; }(i\neq j) \label{eq:3TMHij}
\end{eqnarray}
We note again that the heat capacity $C_i$, thermal diffusion coefficient $\kappa_i$, and the coupling constant between subsystems $G_{ij}$ are all positive real valued.
The matrices $\Lambda$ and $H$ are thus both real valued and symmetric $3\times 3$ matrices.

 Let us seek for the solution ${\bf T}$ of Eq.~\eqref{eq:3TM} expressed as a linear combination of different modes similar to Eq.~\eqref{eq:FSE}.
Then,  {we find three branches $\zeta_1 ({\bf q})$, $\zeta_2 ({\bf q})$, and $\zeta_3 ({\bf q})$ of eigenvalues of a non-symmetric matrix $H'=\Lambda^{-1}H$, 
%can be derived by solving a characteristic equation $\det(H'-\zeta I)=0$. 
by using the formula for the roots of the general cubic equation, as,} 
\begin{eqnarray}
\zeta_1(\mathbf{q})&=&-\{t^{1/3}_+(\mathbf{q})+t^{1/3}_-(\mathbf{q})\}+\alpha(\mathbf{q})/3 \label{eq:zeta_3TM1}\\
\zeta_2(\mathbf{q})&=&-\{\sigma^2t^{1/3}_+(\mathbf{q})+\sigma t^{1/3}_-(\mathbf{q})\}+\alpha(\mathbf{q})/3 \label{eq:zeta_3TM2}\\
\zeta_3(\mathbf{q})&=&-\{\sigma t^{1/3}_+(\mathbf{q})+\sigma^2t^{1/3}_-(\mathbf{q})\}+\alpha(\mathbf{q})/3 \label{eq:zeta_3TM3}
\end{eqnarray}
where 
\begin{eqnarray}
\sigma&=&e^{2\pi i/3}\\
t_{\pm}(\mathbf{q})&=&p_1(\mathbf{q})/2\pm\sqrt{p_1(\mathbf{q})^2/4+p_2(\mathbf{q})^3/27}\\
p_1(\mathbf{q})&=&-(2/27)\alpha(\mathbf{q})^3+(1/3)\alpha(\mathbf{q})\beta(\mathbf{q})+\gamma(\mathbf{q})\\
p_2(\mathbf{q})&=&\beta(\mathbf{q})-\alpha(\mathbf{q})^2/3
\end{eqnarray}
\begin{eqnarray}
\alpha(\mathbf{q})&=&\sum_i\Delta_i(\mathbf{q})\\
\beta(\mathbf{q})&=&\Delta_1(\mathbf{q})\Delta_2(\mathbf{q})+\Delta_2(\mathbf{q})\Delta_3(\mathbf{q})+\Delta_3(\mathbf{q})\Delta_1(\mathbf{q}) \nonumber \\
&&-\Omega_{12}\Omega_{21}-\Omega_{23}\Omega_{32}-\Omega_{13}\Omega_{31}\\
\gamma(\mathbf{q})&=&\Omega_{12}\Omega_{21}\Delta_3(\mathbf{q})+\Omega_{13}\Omega_{31}\Delta_2(\mathbf{q})+\Omega_{23}\Omega_{32}\Delta_1(\mathbf{q})\nonumber \\
&& -\Delta_1(\mathbf{q})\Delta_2(\mathbf{q})\Delta_3(\mathbf{q})-\Omega_{12}\Omega_{23}\Omega_{31}-\Omega_{13}\Omega_{32}\Omega_{21}\nonumber\\
\\
\Delta_i(\mathbf{q})&=&-\frac{\kappa_i}{C_i}\mathbf{q}^2-\sum_{j\neq i}\frac{G_{ij}}{C_i}=\omega_i(\mathbf{q})-\Omega_{ii}\\
 \Omega_{ij}&=&\frac{G_{ij}}{C_i}.
\end{eqnarray}

We show the global structure of the three branches Eqs.~(\ref{eq:zeta_3TM1})-(\ref{eq:zeta_3TM3}) on three dimensional parameter space spanned by $(\omega_1,\omega_2,\omega_3)$ in Fig.~\ref{fig:3TM}.
Figure \ref{fig:3TM}(a) is the exact solution Eq.~(\ref{eq:zeta_3TM1})-(\ref{eq:zeta_3TM3}).
We have chosen a path on the $(\omega_1,\omega_2,\omega_3)$ space to plot these solutions just like plotting the electronic band structure of periodic systems [Fig.~\ref{fig:3TM}(b)].
 {Note that} linear dispersion extending from $\Gamma=(0,0,0)$ point  {corresponds to} a parabolic band in the $\mathbf{q}$ space.
Values of $\mathbf{q}$ vectors on the path can be uniquely determined according to $\omega(\mathbf{q})=-(\kappa_i/C_i)q^2$ by specifying parameters $\kappa_i$ and $C_i$ of each subsystem.
Thus, Fig. \ref{fig:3TM}(a) show the global structure of the exact solution with various $\kappa_i$ and $\mathbf{q}$ values. 
Clearly the $\zeta_3$ branch Eq.~(\ref{eq:zeta_3TM3}) is the counterpart of $\zeta_+$ or the upper branch Eq.~(\ref{eq:zeta_2TM}) of the 2TM.
The lifetime of $\zeta_3$ branch diverges at $\Gamma$ point and the relative amplitude of all subsystems is always of the same sign as can be seen in Fig.~\ref{fig:3TM}(e).
On the other hand, the other two branches, $\zeta_1$ and $\zeta_2$, show behaviors unique to the three temperature system.
We can find such case on $\Gamma-$P100, $\Gamma-$P011, and some high symmetric axis.
On $\Gamma-$P100 axis, where $\kappa_2=\kappa_3=0$, $\zeta_2$ branch excludes the amplitude of subsystem 1 [Fig.~\ref{fig:3TM}(d)]. 
This approximation may apply to electron-longitudinal phonon-transverse phonon system, or electron-phonon-spin system.
The $\zeta_2$ branch then describes a purely phonon-like, or a purely spin-phonon diffusion mode which does not accompany electron thermal diffusion.  
In the same way $\Gamma-$P011 axis can be realized when $\kappa_1=0$ which may provide a good approximation of an electron-hole-phonon system  without nonlinearity. 
Then, The $\zeta_1$ branch indicates an electron-hole diffusion mode which does not accompany phonon thermal diffusion [Fig.~\ref{fig:3TM}(c)].
To our best knowledge these ``anomalous" diffusion modes have never been experimentally observed. 
Further investigations are required to clarify their role in the system dynamics.   
We point out that a similar solution known as the dark state can be found in the quantum three-level system driven by an external field  \cite{GZ15}.

\begin{figure*}[htbp]
\begin{center}
\includegraphics[width=17.5cm]{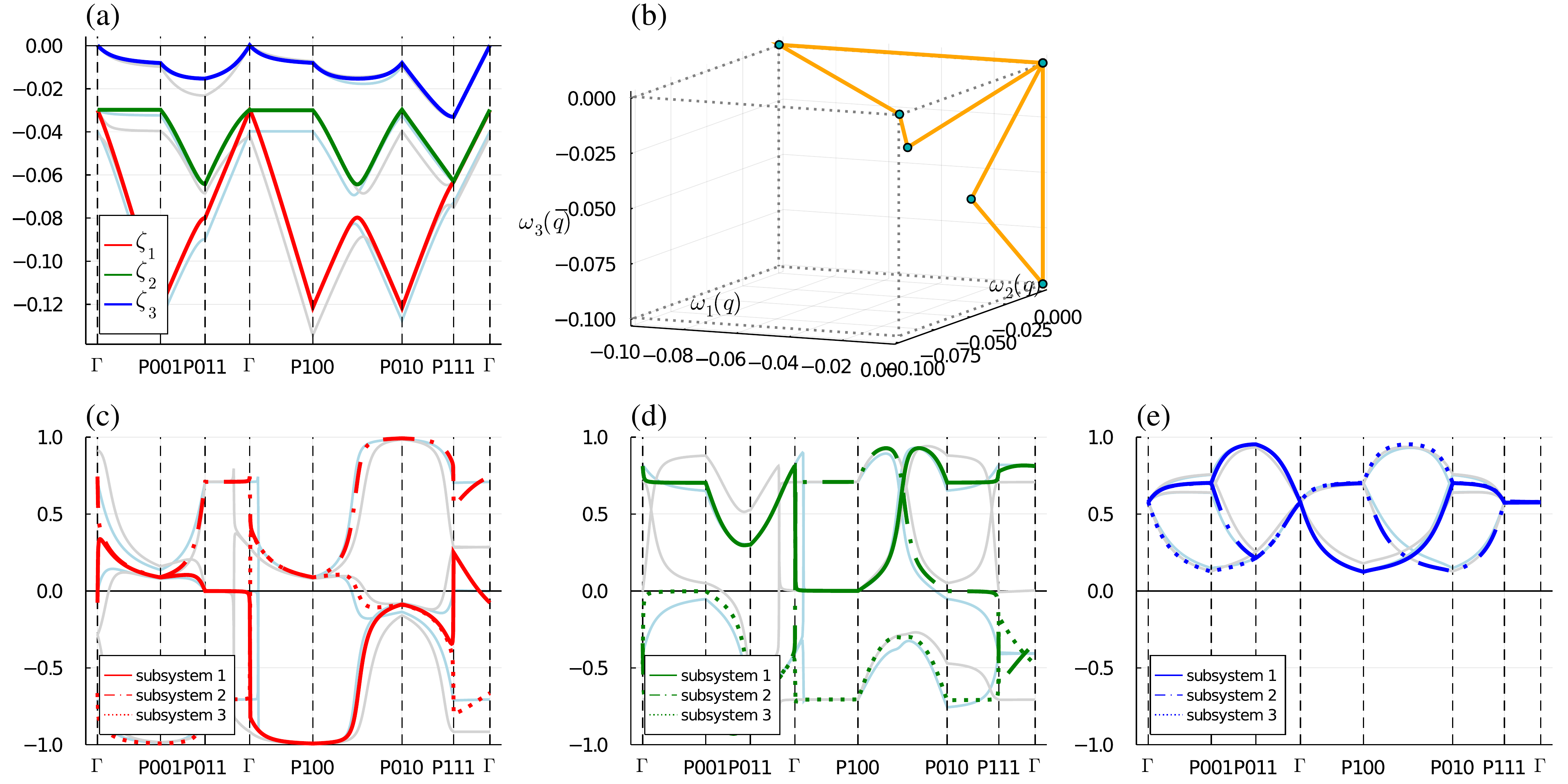}
\caption{Heat capacity $C_1=C_2=C_3=1.6 \times 10^6 {\rm J/m^3K}$ and the coupling between subsystems $G_{12}=G_{23}=G_{31}=1.6\times10^{16}{\rm J/sm^3K}$ are used to plot the exact solutions Eq.~(\ref{eq:zeta_3TM1}-\ref{eq:zeta_3TM3}) of the 3TM in (a).  
A path to plot the band structure (a) is shown in (b) on a parameter space spanned by $(\omega_1(\mathbf{q}), \omega_2(\mathbf{q}), \omega_3(\mathbf{q}))$. Coordinates of points are: $\Gamma=(0,0,0)$, P001=$(0,-0.1,0.0)$,P011=$(0,-0.05,-0.05)$, P100=$(-0.1,0.0,0.0)$, P010=$(0.0,-0.1,0.0)$, P111=$(-0.1,-0.1,-0.1)$. Numerical results of the relative amplitude of the each subsystem temperature of eigenmodes are shown for (c) $\zeta_1$, (d) $\zeta_2$, and (e) $\zeta_3$ branches.  The light blue line and light gray line in (a), (c)-(e) show the results when the parameter is changed as $C_1 \rightarrow 1.0\times 10^6 {\rm J/m^3K}$ and $G_{23}\rightarrow 2.4\times 10^{16}{\rm J/sm^3K}$, respectively.}
\label{fig:3TM}
\end{center}
\end{figure*}

We have found in the previous section that the ``effective one temperature model" is embedded in the linear 2TM.
Then, a question may naturally rise asking how an ``effective two temperature model" can be derived from the linear 3TM.
We can expect such solution will emerge when  {$\mathbf{q}$ is small} and two of the three subsystems are strongly coupled, i.e., $G_{23}\gg G_{12},G_{13}$.
Since the exact solution Eqs.~(\ref{eq:zeta_3TM1})-(\ref{eq:zeta_3TM3}) is too complicated to handle by a simple power expansion,
We put a start point on a weekly coupled 1+2 temperature model, where matrix $H^\prime$  is decomposed to,
\begin{eqnarray}
H'&=&\Lambda^{-1}H=H'_0+H'_1
\label{eq:Hprime}
\end{eqnarray}
\begin{eqnarray}
H'_0&=&
\begin{bmatrix}
\omega_1 & 0 & 0\\
0 & \omega_2-\Omega_{23} & \Omega_{23}\\
0 & \Omega_{32} & \omega_3-\Omega_{32}\\
\end{bmatrix}
\label{eq:H0}
\end{eqnarray}
\begin{eqnarray}
H'_1&=&
\begin{bmatrix}
-\Omega_{11} & \Omega_{12} & \Omega_{13}\\
\Omega_{21} & -\Omega_{21} & 0\\
\Omega_{31} & 0 & -\Omega_{31}\\
\end{bmatrix}
\label{eq:H1}
.
\end{eqnarray}
Equation (\ref{eq:H0}) is a block diagonal matrix describing a decoupled 1+2 temperature model,
whose eigenvalues $\zeta$ are simply given by $\zeta(\mathbf{q})=\zeta_1 (=\omega_1),\zeta_\pm$, where $\zeta_\pm$ is defined by Eq.~(\ref{eq:zeta_2TM}).
The two branches $\zeta_1(\mathbf{q})$ and $\zeta_+(\mathbf{q})$ are degenerate at $\mathbf{q}=0$ as $\zeta_1(\mathbf{q})=0$ and $\zeta_+(\mathbf{q})=0$.
 {This is in contrast to the spectrum in Fig.~\ref{fig:3TM}(a) for three subsystems  coupled with equal strength, where only $\zeta_3$ vanishes at $\mathbf{q}=0$ and $\zeta_1$ and $\zeta_2$ are degenerate there.}
As long as we restrict the timescale to $t \gg \tau_{\zeta_-}=-1/\zeta_-$, we can neglect the contribution of the $\zeta_-$ branch. 
Then, the corresponding right eigenvectors (eigenmodes) $\mathbf{v}^R_{\zeta_1}$, $\mathbf{v}^R_{\zeta_\pm}$ of $H_0^\prime$ are 
\begin{eqnarray}
\mathbf{v}^R_{\zeta_1}=
\begin{bmatrix}
1\\
0\\
0\\
\end{bmatrix},\;
\mathbf{v}^R_{\zeta_+}=
\begin{bmatrix}
0\\
1\\
R_{\zeta_+\mathbf{q}}/\Omega_{23}\\
\end{bmatrix},
\label{eq:vR}
\end{eqnarray}
and the left eigenvectors $\mathbf{v}^L_{\zeta_1}$, $\mathbf{v}^L_{\zeta_\pm}$ are
\begin{eqnarray}
\mathbf{v}^L_{\zeta_1}=
\begin{bmatrix}
1\\
0\\
0\\
\end{bmatrix},\;
\mathbf{v}^L_{\zeta_+}=\frac{1}{f_{\zeta_+\mathbf{q}}}
\begin{bmatrix}
0\\
1\\
R_{\zeta_+\mathbf{q}}/\Omega_{32}\\
\end{bmatrix}
\label{eq:vL}
\end{eqnarray}
where 
\begin{eqnarray}
R_{\zeta_+\mathbf{q}} =\zeta_+(\mathbf{q})-\{\omega_2(\mathbf{q})-\Omega_{23}\}\\
f_{\zeta_+\mathbf{q}}=1+R_{\zeta_+\mathbf{q}}^2/\Omega_{23}\Omega_{32}.
\end{eqnarray}
Here we have dropped the space- and time-dependent factors for simplicity.
Equations (\ref{eq:vR}) and (\ref{eq:vL}) satisfy the orthonormality relation
\begin{eqnarray}
{}^t\mathbf{v}_m^L\cdot \mathbf{v}_n^R=\delta_{mn}.
\end{eqnarray}
We introduce the new right eigenvector $\mathbf{w}^R_m$ of $H'$ in Eq.~(\ref{eq:Hprime}) by a linear combination of $\mathbf{v}^R_m$ as
\begin{eqnarray}
\mathbf{w}^R_m = \sum_{n=\zeta_1,\zeta_+} A_{mn}\mathbf{v}^R_n.
\end{eqnarray}
The amplitude $A_{mn}$, or transformation matrix, is determined by solving a following eigenvalue equation:
\begin{eqnarray}
K\mathbf{A}_m=\eta_m\mathbf{A}_m
\end{eqnarray}
where the elements of $2\times2$ matrix $K$ is given by
\begin{eqnarray}
K_{mn}={}^t\mathbf{v}^L_m \cdot H'\mathbf{v}^R_n
\end{eqnarray}
 {or, explicitly,}
\begin{eqnarray}
K_{\zeta_1\zeta_1}&=& \omega_1-\Omega_{11}\\
K_{\zeta_1\zeta_+}&=& \Omega_{12}+\frac{\Omega_{12}}{\Omega_{23}}R_{\zeta_+\mathbf{q}}\\
K_{\zeta_+\zeta_1}&=&f_{\zeta_+\mathbf{q}}^{-1}\left(\Omega_{21}+\frac{\Omega_{31}}{\Omega_{32}}R_{\zeta_+\mathbf{q}}\right) \\
K_{\zeta_+\zeta_+}&=& \zeta_+-f_{\zeta_+\mathbf{q}}^{-1}\left(\Omega_{21}+\frac{\Omega_{31}}{\Omega_{23}\Omega_{32}}R^2_{\zeta_+\mathbf{q}}\right).
\end{eqnarray}
$\mathbf{A}_m=(A_{m\zeta_1}, A_{m\zeta_+})$ and $\eta_m$ is an eigenvalue.
By taking a small $\mathbf{q}$ limit and omitting terms smaller than $O(G_{23}^{-1})$, we obtain 
\begin{eqnarray}
K\simeq H_{\rm eff}(\mathbf{q})+J(\mathbf{q})
\end{eqnarray}
where
\begin{eqnarray}
H'_{\rm eff}=
\begin{bmatrix}
-\frac{\kappa_1}{C_1}q^2-\frac{G_{12}+G_{13}}{C_1} & \frac{G_{12}+G_{13}}{C_1}\\
\frac{G_{12}+G_{13}}{C_2+C_3} & -\frac{\kappa_2+\kappa_3}{C_2+C_3} q^2-\frac{G_{12}+G_{13}}{C_2+C_3}
\end{bmatrix}\nonumber \\
\label{eq:Heff1}
\end{eqnarray}
and  {the matrix elements of $J(\mathbf{q})$ is given by,}
\begin{eqnarray}
J_{\zeta_1\zeta_1}&=&0\\
J_{\zeta_1\zeta_+}&=&-\frac{\Omega_{13}}{G_{23}}\frac{C_2\kappa_3-C_3\kappa_2}{C_2+C_3}q^2\\
J_{\zeta_+\zeta_1}&=&-\frac{\{(C_3-C_2)G_{13}+2C_3G_{12}\}(C_3\kappa_2-C_2\kappa_3)}{G_{23}(C_2+C_3)^3}q^2\nonumber\\
\\
J_{\zeta_+\zeta_+}&=&-\frac{2(C_2G_{13}-C_3G_{12})(C_3\kappa_2-C_2\kappa_3)}{G_{23}(C_2+C_3)^3}q^2.
\label{eq:J1}
\end{eqnarray}
$J(\mathbf{q})$ is the lowest order correction in large $G_{23}$ limit.
Finally we replace $\mathbf{q}$ by $\nabla$ and reformulate Eq.~(\ref{eq:Heff1}) 
as an effective 2TM:
\begin{eqnarray}
&&\frac{\partial}{\partial t}
\begin{bmatrix}
A_{\zeta_1}\\
A_{\zeta_+}
\end{bmatrix}
\nonumber\\
&&=
\begin{bmatrix}
-\frac{\kappa_1}{C_1}\nabla^2-\frac{G_{\rm eff}}{C_1} & \frac{G_{\rm eff}}{C_1}+J_{\zeta_1\zeta_+}\\
 \frac{G_{\rm eff}}{C_{\rm eff}} +J_{\zeta_+\zeta_1}& -\frac{\kappa_{\rm eff}}{C_{\rm eff}}\nabla^2-\frac{G_{\rm eff}}{C_{\rm eff}}+J_{\zeta_+\zeta_+}
\end{bmatrix}
\begin{bmatrix}
A_{\zeta_1}\\
A_{\zeta_+}
\end{bmatrix},\nonumber\\
\end{eqnarray}
with the effective parameters $G_{\rm eff}$, $C_{\rm eff}$, and $\kappa_{\rm eff}$ given by,
\begin{eqnarray}
G_{\rm eff} &=& G_{12}+G_{13}\\
C_{\rm eff} &=& C_2+C_3\\
\kappa_{\rm eff} &=& \kappa_2+\kappa_3,
\end{eqnarray}
and the lowest order correction terms,
\begin{eqnarray}
J_{\zeta_1\zeta_+}&=&\frac{\Omega_{13}}{G_{23}}\frac{C_2\kappa_3-C_3\kappa_2}{C_2+C_3}\nabla^2\\
J_{\zeta_+\zeta_1}&=&\frac{\{(C_3-C_2)G_{13}+2C_3G_{12}\}(C_3\kappa_2-C_2\kappa_3)}{G_{23}(C_2+C_3)^3}\nabla^2\nonumber\\
\\
J_{\zeta_+\zeta_+}&=&\frac{2(C_2G_{13}-C_3G_{12})(C_3\kappa_2-C_2\kappa_3)}{G_{23}(C_2+C_3)^3}\nabla^2
\label{eq:J2}
\end{eqnarray}
 {The appearance of the $\nabla^2$ dependent correction terms owes to the deviation of the $\zeta_+(\mathbf{q})$ branch from a parabolic dispersion at large $\mathbf{q}$. }

\section{spectrum of linear multiple temperature model}
\label{theorem}

 {It is straightforward to extend the 2TM Eq.~\eqref{eq:2TM} and 3TM Eqs.~\eqref{eq:3TM}-\eqref{eq:3TMHij} to a general $N$-temperature model.
We call it the linear multiple temperature model (MTM).
This extension is done just by increasing the number of subsystems in Eqs.~\eqref{eq:3TM}-\eqref{eq:3TMHij} from three to $N$,
The MTM is then defined by:
\begin{eqnarray}
\Lambda \frac{\partial}{\partial t}\mathbf{T}=H \mathbf{T},
\label{eq:MTM}
\end{eqnarray}
\begin{eqnarray}
\Lambda_{ij}=C_i\delta_{ij},
\end{eqnarray}
\begin{eqnarray}
H_{ii}&=&\kappa_i\nabla^2-\sum_{j\neq i}^NG_{ij},\label{eq:HNii}\\
H_{ij}&=&H_{ji}=G_{ij} \textrm{\;for\;}(i\neq j),\label{eq:HNij}
\end{eqnarray}
where $\mathbf{T}$ now denotes the $N$-components vector representing the subsystem temperatures, and the subscripts $i,j$ run from 1 to $N$.
Examples of the MTM include the nonthermal lattice model\cite{WBE16} or just multitemperature model\cite{LVCR16}, which assign phonon mode resolved temperatures.
}
In the previous sections we have found that the spectra of linear 2TM is always negative real valued, or exactly zero at $\mathbf{q}=(0,0,0)$ point.
The linear 3TM shows the same property within the parameter range we plot in Fig.~\ref{fig:3TM}(a).  {Here we prove that this physically reasonable property holds for any $N$, assuring that the temperatures of all the subsystems asymptotically tend to a common, spatially uniform, final value.}

\begin{thm}
\label{thm:negativity}
When the boundary condition Eq.(\ref{eq:BC}) is given, the linear MTM Eq.~(\ref{eq:MTM}) is transformed as
\begin{eqnarray}
\frac{\partial}{\partial t}\mathbf{T}=\Lambda^{-1}H \mathbf{T}.
\end{eqnarray}
Once the initial condition is given, we can completely determine the MTM dynamics from the eigenvalue of a matrix 
\begin{eqnarray}
H'= \Lambda^{-1}H.
\label{eq:Hp2}
\end{eqnarray}
$H'$ can be diagonalized and its eigenvalues  {$\zeta (\mathbf{q})$}
satisfy the following two properties:
\begin{enumerate}
\item the eigenvalue $\zeta (\mathbf{q})$ of  {matrix $H^\prime$} always satisfies $\zeta(\mathbf{q})\in \mathbb{R}$ and $\zeta(\mathbf{q}) \le 0$,
\item When $\zeta(\mathbf{q})=0$, $\mathbf{q}$ always satisfies $\mathbf{q} = 0$.
\end{enumerate}
\end{thm}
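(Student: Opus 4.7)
The plan is to reduce the non-symmetric eigenproblem for $H^{\prime}=\Lambda^{-1}H$ to a symmetric one via a similarity transformation. Since $\Lambda$ is diagonal with strictly positive entries $C_i$, its symmetric square root $\Lambda^{1/2}$ is well-defined. Acting $H$ on a mode $\prod_i\cos(q_i r_i)$ replaces the Laplacian by $-q^2$, so on this mode $H$ reduces to the real symmetric matrix $\hat H(\mathbf{q})$ with entries $\hat H_{ii}=-\kappa_iq^2-\sum_{j\neq i}G_{ij}$ and $\hat H_{ij}=G_{ij}$. The matrix $\tilde H(\mathbf{q})\equiv\Lambda^{-1/2}\hat H(\mathbf{q})\Lambda^{-1/2}$ is then real and symmetric, and the identity $\Lambda^{-1}\hat H(\mathbf{q})=\Lambda^{-1/2}\,\tilde H(\mathbf{q})\,\Lambda^{1/2}$ shows that $H^{\prime}$ (restricted to this mode) is similar to $\tilde H(\mathbf{q})$. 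Two similar matrices share their spectra, so the spectral theorem applied to $\tilde H(\mathbf{q})$ delivers both the diagonalizability of $H^{\prime}$ and the reality of $\zeta(\mathbf{q})$ asserted in the theorem.

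Next I would prove $\hat H(\mathbf{q})\preceq 0$, from which $\tilde H(\mathbf{q})\preceq 0$ follows by the substitution $\mathbf{x}=\Lambda^{-1/2}\mathbf{y}$ in the quadratic form. Decompose $\hat H(\mathbf{q})=-D(\mathbf{q})+M$, where $D(\mathbf{q})=\mathrm{diag}(\kappa_iq^2)\succeq 0$ is the diffusion part and $M$ is the symmetric coupling matrix with $M_{ii}=-\sum_{j\neq i}G_{ij}$ and $M_{ij}=G_{ij}$. Using $G_{ij}=G_{ji}>0$, a standard rearrangement exposes $M$ as a (negative) graph Laplacian,
\[
\mathbf{x}^{T} M\,\mathbf{x}= -\tfrac{1}{2}\sum_{i\neq j}G_{ij}(x_i-x_j)^2\leq 0,
\]
while $\mathbf{x}^{T}(-D(\mathbf{q}))\mathbf{x}= -q^2\sum_i\kappa_i x_i^2\leq 0$. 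Adding the two gives $\mathbf{x}^{T}\hat H(\mathbf{q})\mathbf{x}\leq 0$ for every $\mathbf{x}$, so all eigenvalues of $\tilde H(\mathbf{q})$, and therefore of $H^{\prime}$, are non-positive, settling property 1.

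For property 2, assume $\tilde H(\mathbf{q})\mathbf{y}=0$ for some eigenvector $\mathbf{y}\neq 0$. With $\mathbf{x}=\Lambda^{-1/2}\mathbf{y}\neq 0$ the identity $\mathbf{y}^{T}\tilde H(\mathbf{q})\mathbf{y}=0$ becomes a sum of two non-positive terms,
\[
-q^2\sum_i\kappa_i x_i^2\;-\;\tfrac{1}{2}\sum_{i\neq j}G_{ij}(x_i-x_j)^2=0,
\]
so each must vanish separately. The first gives $q^2\sum_i\kappa_i x_i^2=0$; as every $\kappa_i>0$, any $\mathbf{q}\neq 0$ would force $\mathbf{x}=0$ and hence $\mathbf{y}=0$, a contradiction. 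Therefore $\zeta(\mathbf{q})=0$ requires $\mathbf{q}=0$. The main obstacle I anticipate is bookkeeping — cleanly resymmetrizing the coupling contribution into the square-difference form above; once that is in hand both properties fall out and no appeal to the more delicate ``$\mathbf{y}^{T}A\mathbf{y}=0\Rightarrow A\mathbf{y}=0$ for $A\preceq 0$'' is required, because the diffusion and coupling pieces separate in sign term by term.
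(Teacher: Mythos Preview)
Your proof is correct. Both you and the paper open with the same symmetrization $\Lambda^{-1}H\sim\Lambda^{-1/2}H\Lambda^{-1/2}$ to secure reality and diagonalizability, but thereafter the arguments diverge. The paper invokes Gershgorin's theorem on $\Lambda^{-1}H$: the $i$-th disk has center $-(\kappa_i/C_i)q^2-R_i$ and radius $R_i=\sum_{j\neq i}G_{ij}/C_i$, so every disk sits in $\{\mathrm{Re}\,z\le 0\}$ and can touch the origin only when $q=0$; combined with reality this gives both properties at once. Your route instead recognizes the coupling block $M$ as the negative of a weighted graph Laplacian and writes out the quadratic form $\mathbf{x}^{T}M\mathbf{x}=-\tfrac{1}{2}\sum_{i\neq j}G_{ij}(x_i-x_j)^2$, then adds the diffusion piece and reads off both non-positivity and the $\mathbf{q}=0$ condition term by term. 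The Gershgorin argument is shorter to state and needs no manipulation of the quadratic form; your argument is more elementary (no named theorem) and structurally more revealing, since the Laplacian identity makes the kernel of $M$---the constant vector---transparent, essentially anticipating the paper's Theorem~\ref{thm:q0}.
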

\begin{proof}
We consider the following eingenvalue equation
\begin{eqnarray}
\Lambda^{-1}H \mathbf{v}=\zeta \mathbf{v},
\end{eqnarray}
where $\mathbf{v}={}^t(v_1,v_2,\cdots,v_N)$ is a right eigenvector and $\zeta$ is a corresponding eigenvalue.
 {By multiplying both sides by a diagonal matrix $\Lambda^{1/2}$ from the left}, which satisfies $(\Lambda^{1/2})^2=\Lambda$, we obtain
\begin{eqnarray}
\Lambda^{-1/2}H\Lambda^{-1/2}\cdot \Lambda^{1/2}\mathbf{v}=\zeta \Lambda^{1/2}\mathbf{v}.
\end{eqnarray}
Thus $\Lambda^{1/2}\mathbf{v}$ becomes an eigenvector of a symmetric matrix $\Lambda^{-1/2}H\Lambda^{-1/2}$ whose eigenvalue is given by $\zeta$.
Clearly the $\zeta$ always satisfies $\zeta \in \mathbb{R}$.

We can further restrict the distribution of eigenvalues on the complex plain by using the Gershgorin's theorem \cite{G31}, which states that the eigenvalues of $N\times N$ matrix $A$ exist on a closed region $D$ which is defined by 
\begin{eqnarray}
D\equiv \tilde{C}_1\cup \tilde{C}_2 \cup \cdots \cup \tilde{C}_N,
\end{eqnarray}
where $\tilde{C}_i (i=1,\cdots,N)$ is a closed disk whose center position is given by $A_{ii}$ on the complex plane and its radius $R_i$ is given by
\begin{eqnarray}
R_i=\sum_{j\neq i}^N |A_{ij}|.
\end{eqnarray}
In our case the center position of the closed disc $\tilde{C}_i$ is given by
\begin{eqnarray}
\{\Lambda^{-1}H\}_{ii}= -\frac{\kappa_i}{C_i}\mathbf{q}^2-\sum_{j\neq i}\frac{G_{ij}}{C_i}=-\frac{\kappa_i}{C_i}\mathbf{q}^2-R_i,
\end{eqnarray}
and the radius $R_i$ of $\tilde{C}_i$ is 
\begin{eqnarray}
R_i=\sum_{j\neq i}\left|\frac{G_{ij}}{C_i}\right|=\sum_{j\neq i}\frac{G_{ij}}{C_i}
\end{eqnarray}
since $G_i$ and $C_i$ are positive real valued parameters.
The closed region $D$ therefore extends over a semi-infinite plain whose real part is negative, and $D$ can include the origin of complex plain only if $\mathbf{q}=0$.
We therefore conclude that $\zeta(\mathbf{q})$ is always  {non-positive} real valued and  {can be 0} only if $\mathbf{q}=0$.
\end{proof} 
 
This theorem strongly restricts the behavior of linear MTM.
For any given initial condition, the linear MTM only provides damping solutions regardless of material parameters. 
Consequently, an external heat, or maybe nonlinearity is needed to excite oscillatory and amplifying behavior in its dynamics.

In addition, we can show the following property of the MTM's solution at $\mathbf{q} = 0$.
%The following theorem resolves this point.
\begin{thm}
\label{thm:q0}
When $\mathbf{q} = 0$, at least one eigenvalue of Eq.~\eqref{eq:Hp2} becomes 0 {, and at least one right eigenvector of such solutions has all its components equal.}
\begin{proof}
We firstly prove the first half of the theorem. 
The matrix $H$ given by
\begin{eqnarray}
H_{ii}&=&-\kappa_iq^2-\sum_{j\neq i}^NG_{ij},\label{eq:HNii2}\\
H_{ij}&=&H_{ji}=G_{ij} \textrm{\;for\;}(i\neq j),\label{eq:HNij2}
\end{eqnarray}
becomes linear dependent when $\mathbf{q} = 0$, \textit{i}.\textit{e}., $\det{H} = 0$ at $\mathbf{q} = 0$.
The matrix $H'=\Lambda^{-1}H$ then becomes linear dependent:   $\det{H'}=\det{\Lambda^{-1}}\det{H} = 0$ at $\mathbf{q} = 0$ limit, as well.
This implies a condition 
\begin{eqnarray}
\det{H'(\mathbf{q})}=\prod_i^N \zeta_i(\mathbf{q}) = 0 \; \textrm{for}\; \mathbf{q} = 0. 
\label{eq:Hplim}
\end{eqnarray}
Here $\zeta_i(\mathbf{q})$ $(i=1,\cdots,N)$ is an eigenvalue of $H'$.
To satisfy Eq.~(\ref{eq:Hplim}), at least one $\zeta_i(\mathbf{q})$ must fulfill a condition 
\begin{eqnarray}
\zeta_i(\mathbf{q}) = 0 \;\textrm{for}\; \mathbf{q} = 0. 
\end{eqnarray}

 {We secondly prove the latter half of the theorem.
By substituting a right eigenvector with equal components:
\begin{eqnarray}
\mathbf{v}^R={}^t[v,v,\cdots,v],
\end{eqnarray}
we can show the following:
\begin{eqnarray}
\sum_{j}\{\Lambda^{-1}H\}_{ij}v^R_j=0\;\textrm{for}\; \mathbf{q} = 0.
\end{eqnarray}
}
\end{proof}
\end{thm}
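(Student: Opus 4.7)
The plan is to exploit a single structural feature of the matrix $H$ evaluated at $\mathbf{q}=0$. Using Eqs.~(\ref{eq:HNii2})--(\ref{eq:HNij2}), the $i$-th row sum reads
\begin{equation}
H_{ii}+\sum_{j\neq i}H_{ij}=-\sum_{j\neq i}G_{ij}+\sum_{j\neq i}G_{ij}=0,
\end{equation}
so the uniform-component vector $\mathbf{1}={}^t(1,1,\ldots,1)$ is annihilated by $H$ at $\mathbf{q}=0$. This single identity drives both halves of the theorem, and it is precisely the matrix-level statement of energy-flow balance once all subsystems share a common, spatially uniform temperature.

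For the first half, I would argue that $H\mathbf{1}=0$ makes $H$ singular, hence $\det H=0$ at $\mathbf{q}=0$. Because $\Lambda$ is diagonal with strictly positive entries, $\Lambda^{-1}$ exists and $\det H'=\det(\Lambda^{-1})\det H=0$, so that writing the determinant as a product of eigenvalues $\prod_i\zeta_i(\mathbf{q}=0)=0$ forces at least one eigenvalue to vanish. For the second half, the same identity immediately supplies the eigenvector: $H'\mathbf{1}=\Lambda^{-1}H\mathbf{1}=0$, so $\mathbf{1}$ itself is a right eigenvector of $H'$ with eigenvalue zero and with all components equal. Note that this construction subsumes the determinantal argument of the first half, since exhibiting a nonzero null vector already certifies that zero lies in the spectrum; one could therefore reverse the order and prove both statements from the row-sum identity in a single line.

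I do not foresee any genuine obstacle; the only conceptual step is noticing the vanishing row-sum identity, after which the rest is routine linear algebra. One interpretive care is worth flagging: the phrase ``at least one right eigenvector of such solutions has all its components equal'' should be read as an existence statement within the zero-eigenspace of $H'$, not as a uniqueness claim. When the zero eigenvalue is nondegenerate (as for the symmetric-coupling 3TM at $\Gamma$, cf.\ Fig.~\ref{fig:3TM}(a), where only $\zeta_3$ vanishes), the two readings coincide and $\mathbf{1}$ spans the entire zero-eigenspace; in more degenerate situations only the existence assertion is guaranteed by the argument above.
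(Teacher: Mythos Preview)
Your proof is correct and follows essentially the same approach as the paper: both arguments hinge on the vanishing row-sum identity at $\mathbf{q}=0$, using it to show $\det H=0$ (hence $\det H'=0$ and some $\zeta_i=0$) and to exhibit $\mathbf{1}$ as a right null vector of $H'$. Your presentation is slightly more explicit in deriving the row-sum identity and in noting that the second half logically subsumes the first, whereas the paper simply asserts the linear dependence of $H$ without spelling out the reason; otherwise the two proofs are the same.
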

It should also be noted that Theorems 1 and 2 jointly assures that the temperatures of all the subsystems approach to a common, spatially uniform, finite value in the long time limit.  

 {From a perspective of the total energy conservation, the stationary solution $\zeta(\mathbf{q}=0)=0$ plays a special role.
The total energy $U_{\mathrm{total}}(t)$ of the system at time $t$ is given by
\begin{eqnarray}
U_{\mathrm{total}}(t)=\int_V d^3\mathbf{r} \sum_i^N C_i T_i(\mathbf{r},t)
\label{eq:Utotal}
\end{eqnarray}
where the sum runs over all the susbsystems, and the integral is taken over the system volume $V$.
Since the temperature distribution $T_i(\mathbf{r},t)$ can be decomposed into the contribution of each eigenmode, we can define the eigenmode resolved energy $U_{\zeta\mathbf{q}}(t)$ by
\begin{eqnarray}
U_{\zeta\mathbf{q}} =\int_V d^3\mathbf{r}\sum_i^N C_i v^R_{\zeta\mathbf{q},i}(\mathbf{r},t) 
\label{eq:Upartial}
\end{eqnarray}
where $v^R_{\zeta\mathbf{q},i}$ denotes the $i$-th component of the right eigenvector of the MTM Eq.~(\ref{eq:Hp2}).
The $U_{\mathrm{total}}(t)$ is retrieved by summing up $U_{\zeta\mathbf{q}}$ over the mode index $\zeta$ and wave vector $\mathbf{q}$:
\begin{eqnarray}
U_{\mathrm{total}}(t) = \sum_{\zeta,\mathbf{q}} U_{\zeta\mathbf{q}}.
\end{eqnarray}
In the absence of the external heat source, the total energy of the system Eq.~(\ref{eq:Utotal}) must be conserved.
The following theorem guarantees this requirement and, moreover, shows that all the eigenmodes but $\zeta(\mathbf{q}=0)=0$ do not hold net energy.
\begin{thm}
Given the boundary condition Eq.(\ref{eq:BC}), 
the eigenmode-resolved energy Eq.~(\ref{eq:Upartial}) vanishes:
\begin{eqnarray}
U_{\zeta\mathbf{q}}=0,
\end{eqnarray}
except for the spatially uniform, stationary mode $U_{\zeta=0,\mathbf{q}=0}$ whose eigenvalue is $\zeta(\mathbf{q}=0)=0$.
\begin{proof}
The case of $\mathbf{q}\neq 0$ is trivial since its spatial dependence is sinusoidal, which becomes zero when integrated over the system volume. 
We next show the case of $\zeta(\mathbf{q}=0)\neq 0$.
We can then write down the MTM Eq.~(\ref{eq:MTM}) as
\begin{eqnarray}
\Lambda\frac{\partial}{\partial t}\mathbf{v}^R_{\zeta,\mathbf{q}}=\Lambda \zeta(\mathbf{q})\mathbf{v}^R_{\zeta,\mathbf{q}}=H\mathbf{v}^R_{\zeta,\mathbf{q}}.
\end{eqnarray}
By using the property of $H$ for $\mathbf{q}=0$:
\begin{eqnarray}
\sum_iH_{ij}=\sum_jH_{ij}=0,
\end{eqnarray}
we can derive the following:
\begin{eqnarray}
\zeta(\mathbf{q})\sum_i\Lambda_{ii}v^R_{\zeta\mathbf{q},i}&=&
\zeta(\mathbf{q})\sum_iC_{i}v^R_{\zeta\mathbf{q},i}\nonumber\\
&=&\zeta(\mathbf{q}) U_{\zeta\mathbf{q}}\nonumber\\
&=&\sum_j(\sum_i H_{ij}) v^R_{\zeta\mathbf{q},j}=0.
\end{eqnarray}
Since $\zeta(\mathbf{q}=0)\neq 0$, $U_{\zeta\mathbf{q}}$ must be zero.
The total energy Eq.~(\ref{eq:Utotal}), therefore, has a finite contribution only from the stationary mode $\zeta=0, \mathbf{q}=0$. 
\end{proof}
\end{thm}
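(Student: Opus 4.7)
The plan is to split the proof into two cases, treating the spatial and temporal parts of the eigenmode separately, and then to show that $U_{\zeta\mathbf{q}}$ vanishes in each non-stationary case.

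First I would dispose of the case $\mathbf{q}\neq 0$ by the spatial integration. Each eigenmode carries the spatial factor $\prod_i \cos(q_i r_i)$ with $q_i = \pi n_i/L_i$ and $n_i$ a non-negative integer. If $\mathbf{q}\neq 0$, then some $n_i\ge 1$, and
\begin{equation}
\int_0^{L_i}\cos\!\left(\tfrac{\pi n_i r_i}{L_i}\right)dr_i = 0.
\end{equation}
Since the volume integral factorizes over $r_x,r_y,r_z$, the whole spatial integral in Eq.~(\ref{eq:Upartial}) vanishes regardless of the internal vector $(v^R_{\zeta\mathbf{q},1},\dots,v^R_{\zeta\mathbf{q},N})$. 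Hence $U_{\zeta\mathbf{q}}=0$ for all $\mathbf{q}\neq 0$, which matches the ``trivial'' case noted in the statement.

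The substantive part is the case $\mathbf{q}=0$ with $\zeta(\mathbf{0})\neq 0$. Here the spatial factor is unity, so after integration $U_{\zeta\mathbf{0}}=V\,e^{\zeta t}\sum_i C_i v^R_{\zeta\mathbf{0},i}$, and I must show that $\sum_i C_i v^R_{\zeta\mathbf{0},i}=0$. The key structural observation is that at $\mathbf{q}=0$ the matrix $H$ has both vanishing row sums and vanishing column sums: from Eqs.~(\ref{eq:HNii}) and (\ref{eq:HNij}),
\begin{equation}
\sum_j H_{ij}=-\sum_{j\neq i}G_{ij}+\sum_{j\neq i}G_{ij}=0,
\end{equation}
and by the symmetry $H_{ij}=H_{ji}$ also $\sum_i H_{ij}=0$. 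I would then start from the eigenvalue equation $\Lambda\zeta(\mathbf{q})\mathbf{v}^R_{\zeta\mathbf{q}}=H\mathbf{v}^R_{\zeta\mathbf{q}}$, take the $i$-th component, and sum over $i$:
\begin{equation}
\zeta(\mathbf{0})\sum_i C_i v^R_{\zeta\mathbf{0},i}=\sum_i\sum_j H_{ij}v^R_{\zeta\mathbf{0},j}=\sum_j\!\left(\sum_i H_{ij}\right)v^R_{\zeta\mathbf{0},j}=0.
\end{equation}
Since $\zeta(\mathbf{0})\neq 0$ by hypothesis, the inner sum must vanish, and therefore $U_{\zeta\mathbf{0}}=0$.

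Combining the two cases, only the mode with $\mathbf{q}=0$ \emph{and} $\zeta=0$ (which exists by Theorem~\ref{thm:q0}) can carry a nonzero $U_{\zeta\mathbf{q}}$, giving the claim. The only subtle step—and the one I would be most careful about—is the interchange of summations that converts the natural row-sum appearing from the eigenvalue equation into the column-sum of $H$; this interchange relies crucially on the symmetry $H_{ij}=H_{ji}$ (inherited from $G_{ij}=G_{ji}$) and is where the total-energy conservation gets encoded in the algebra. I would also note as a corollary that, since the coefficients of the expansion in eigenmodes are time-independent, the total energy $U_{\mathrm{total}}$ reduces to the contribution of the single stationary mode $\zeta=0,\,\mathbf{q}=0$ and is therefore conserved.
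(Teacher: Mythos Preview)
Your proposal is correct and follows essentially the same route as the paper: the $\mathbf{q}\neq 0$ case is handled by the vanishing cosine integral, and the $\mathbf{q}=0$, $\zeta\neq 0$ case by summing the eigenvalue relation $\zeta C_i v^R_i=\sum_j H_{ij}v^R_j$ over $i$ and invoking $\sum_i H_{ij}=0$. Your extra care in deriving the vanishing column sums explicitly from the definition of $H_{ii}$ together with the symmetry $H_{ij}=H_{ji}$ is a welcome clarification; note, however, that the interchange of the two finite sums is automatic---what actually requires the symmetry is the identity $\sum_i H_{ij}=0$ itself, not the reordering.
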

}
We finally show a subsidiary theorem about  {the monotonically decreasing property of the eigenvalue $\zeta(\mathbf{q})$ with respect to the magnitude $q$ of wave number.}
\begin{thm}
When the boundary condition Eq.(\ref{eq:BC}) is given, the eigenvalue $\zeta(\mathbf{q})$ of the linear MTM defined by Eq.~(\ref{eq:MTM}) always satisfies
\begin{eqnarray}
\frac{\partial}{\partial q}\zeta(\mathbf{q}) \le 0,
\end{eqnarray}
where $q=|\mathbf{q}|$.
\begin{proof}
It is sufficient to proove it for the symmetric matrix $\tilde{H}=\Lambda^{-1/2}H\Lambda^{-1/2}$ as it possesses same eigenvalues with the MTM's matrix $\Lambda^{-1}H$.
In this case the right eigenvector $\mathbf{v}$ coincides with left one.
We can then immediately write down as follows:
\begin{eqnarray}
&&\frac{\partial}{\partial q}\zeta(\mathbf{q})=
\frac{\partial}{\partial q}\sum_{ij} v_i \tilde{H}_{ij}v_j\nonumber\\
&&=\sum_{ij}\{ (\frac{\partial}{\partial q} v_i)\tilde{H}_{ij}v_j+v_i\tilde{H}_{ij}(\frac{\partial}{\partial q} v_j)
+v_i(\frac{\partial}{\partial q} H_{ij})v_j\}\nonumber \\
&&=2\zeta(\mathbf{q})\sum_i (\frac{\partial}{\partial q} v_i)v_i+\sum_iv_i^2(\frac{\partial}{\partial q}\tilde{H}_{ii})\nonumber\\
&&=-\sum_iv_i^2\frac{\kappa_i}{C_i}q \le 0.
\label{eq:thm2_proof}
\end{eqnarray}
In the third line of Eq.~(\ref{eq:thm2_proof}) we assumed the norm conservation of $\mathbf{v}$:
\begin{eqnarray}
\frac{\partial}{\partial q}\sum_iv_i^2=0.
\end{eqnarray}
\end{proof}
\end{thm}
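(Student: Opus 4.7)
The plan is to reduce the problem to a Hellmann--Feynman calculation for a one-parameter family of real symmetric matrices. The central observation is that although the dynamical matrix $H'=\Lambda^{-1}H$ is non-symmetric, its spectrum coincides with that of the real symmetric matrix $\tilde H(\mathbf q)=\Lambda^{-1/2}H\Lambda^{-1/2}$, as already exploited in the proof of Theorem~\ref{thm:negativity}. This reduction is essential because symmetry allows us to choose a single orthonormal eigenvector $\mathbf v(\mathbf q)$ that serves simultaneously as left and right eigenvector, which is precisely what makes the $\partial_q\mathbf v$ contributions drop out.

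First I would isolate the $q$-dependence of $\tilde H$. Inspection of Eqs.~(\ref{eq:HNii})--(\ref{eq:HNij}) shows that only the diagonal entries depend on $q$, and do so exclusively through the diffusion term $-(\kappa_i/C_i)q^2$; the off-diagonal couplings $G_{ij}/\sqrt{C_iC_j}$ are $q$-independent. Consequently $\partial_q\tilde H(\mathbf q)$ is a diagonal matrix whose entries are $-2q\,\kappa_i/C_i\le 0$ for $q\ge 0$.

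Next, I would write $\zeta(\mathbf q)=\mathbf v(\mathbf q)^T\tilde H(\mathbf q)\,\mathbf v(\mathbf q)$ for a normalized eigenvector $\mathbf v^T\mathbf v=1$, and differentiate with respect to $q$. The two terms containing $\partial_q\mathbf v$ combine through $\tilde H\mathbf v=\zeta\mathbf v$ into $2\zeta\,\mathbf v^T\partial_q\mathbf v$, which vanishes on differentiating the normalization identity. What remains is the Hellmann--Feynman formula
\begin{equation}
\partial_q\zeta(\mathbf q)=\mathbf v^T(\partial_q\tilde H)\mathbf v=-2q\sum_i\frac{\kappa_i}{C_i}v_i^2,
\end{equation}
which is manifestly non-positive because $\kappa_i,C_i>0$ and $q\ge 0$, giving the desired monotonicity.

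The main obstacle I would expect is the potential lack of smoothness of $\mathbf v(\mathbf q)$ at eigenvalue crossings, where the labelling of branches can fail to be analytic. To handle this I would either restrict to open intervals of $q$ on which the relevant eigenvalue is simple, so that $\mathbf v(\mathbf q)$ is real-analytic by standard perturbation theory for symmetric matrices, or appeal to Rellich's theorem guaranteeing globally analytic selections of eigenvalues and eigenvectors for a real-analytic family of symmetric matrices. Either route extends the pointwise inequality to the entire branch by continuity, so the monotonicity conclusion holds on all of $q\ge 0$.
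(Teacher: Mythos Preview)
Your proposal is correct and is essentially the same Hellmann--Feynman argument on the symmetrized matrix $\tilde H=\Lambda^{-1/2}H\Lambda^{-1/2}$ that the paper uses. Your version is in fact slightly cleaner: you carry the correct factor $-2q\sum_i(\kappa_i/C_i)v_i^2$ (the paper drops the $2$), and your remark about smoothness at eigenvalue crossings via Rellich-type analyticity addresses a technical point the paper leaves implicit.
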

 {This theorem physically states that the larger the the wave vector $q$ of the mode, the shorter the mode lifetime $\tau(\mathbf{q})=-1/\zeta(\mathbf{q})$, or equivalently, the faster the mode damps.}

\section{Conclusions}
\label{sec:conclusions}

{We have presented the exact analytical solutions of the linear multiple temperature model under the insulated boundary condition. By extending the familiar Fourier series expansion of the heat equation, we have shown that the system dynamics is expressed as a linear combination of eigenmodes with different wave numbers and lifetimes (or decay constants). }

{
The eigenmode picture has enabled us to unveil the hierarchical structure of the MTM that an $N+1$-temperature model approximates an $N$-temperature model in certain limits.
For example, in the small wave number $\mathbf{q}$ and long time limits, the upper branch $\zeta_+(\mathbf{q})$ of the 2TM approximates the 1TM. 
If the weak coupling limit is additionally taken, the 3TM approximates the 2TM.
We note that such approximations break in a small spacial scale.
In the opposite, large wave number limit, on the other hand, the diffusion term dominates the dynamics and the electron-lattice coupling is negligible. 
}

{
We have also found that a unique diffusion mode appears when the 3TM parameter set is asymmetric, e.g., two of the three subsystems have vanishing thermal diffusion constants $\kappa_i$.
Such an approximation is commonly used to model the electron-spin-phonon system\cite{BMDB96,KKWHAC14} or the electron-multiple phonon mode system \cite{WBE16,BOMMSR20}.
Then appear modes purely composed of the phonon-phonon or spin-phonon temperature, with the electron temperature completely suppressed, which have no analog in the 2TM.
Similar modes are expected also in four or more temperature models.}

{We have demonstrated that the linear MTM possesses physically reasonable properties.
Each eigenmode neither grows nor oscillates, but exponentially decays, except for a spatially uniform, stationary mode corresponding to the final state where all the subsystems has the same temperature.
The larger the mode wave vector, the shorter its lifetime.
Only the stationary mode has a net finite energy, assuring the conservation of the total energy.}

{The present linear MTM can be extended to better describe real materials by including external fields, non-linearity, and spatial non-uniformity etc.}
{Even in such a model, the eigenmode picture will be a useful tool to interpret the system dynamics by introducing mode excitation, mode-mode interaction, and mode scattering processes, as is done in quantum mechanics and nonlinear optics.
}

\red{}

\begin{acknowledgements}
This research was supported by MEXT
Quantum Leap Flagship Program (MEXT Q-LEAP) Grant
No. JPMXS0118067246.
\end{acknowledgements}

\bibliography{reference}

%merlin.mbs apsrev4-1.bst 2010-07-25 4.21a (PWD, AO, DPC) hacked
%Control: key (0)
%Control: author (8) initials jnrlst
%Control: editor formatted (1) identically to author
%Control: production of article title (-1) disabled
%Control: page (0) single
%Control: year (1) truncated
%Control: production of eprint (0) enabled
\begin{thebibliography}{32}%
\makeatletter
\providecommand \@ifxundefined [1]{%
 \@ifx{#1\undefined}
}%
\providecommand \@ifnum [1]{%
 \ifnum #1\expandafter \@firstoftwo
 \else \expandafter \@secondoftwo
 \fi
}%
\providecommand \@ifx [1]{%
 \ifx #1\expandafter \@firstoftwo
 \else \expandafter \@secondoftwo
 \fi
}%
\providecommand \natexlab [1]{#1}%
\providecommand \enquote  [1]{``#1''}%
\providecommand \bibnamefont  [1]{#1}%
\providecommand \bibfnamefont [1]{#1}%
\providecommand \citenamefont [1]{#1}%
\providecommand \href@noop [0]{\@secondoftwo}%
\providecommand \href [0]{\begingroup \@sanitize@url \@href}%
\providecommand \@href[1]{\@@startlink{#1}\@@href}%
\providecommand \@@href[1]{\endgroup#1\@@endlink}%
\providecommand \@sanitize@url [0]{\catcode `\\12\catcode `\$12\catcode
  `\&12\catcode `\#12\catcode `\^12\catcode `\_12\catcode `\%12\relax}%
\providecommand \@@startlink[1]{}%
\providecommand \@@endlink[0]{}%
\providecommand \url  [0]{\begingroup\@sanitize@url \@url }%
\providecommand \@url [1]{\endgroup\@href {#1}{\urlprefix }}%
\providecommand \urlprefix  [0]{URL }%
\providecommand \Eprint [0]{\href }%
\providecommand \doibase [0]{http://dx.doi.org/}%
\providecommand \selectlanguage [0]{\@gobble}%
\providecommand \bibinfo  [0]{\@secondoftwo}%
\providecommand \bibfield  [0]{\@secondoftwo}%
\providecommand \translation [1]{[#1]}%
\providecommand \BibitemOpen [0]{}%
\providecommand \bibitemStop [0]{}%
\providecommand \bibitemNoStop [0]{.\EOS\space}%
\providecommand \EOS [0]{\spacefactor3000\relax}%
\providecommand \BibitemShut  [1]{\csname bibitem#1\endcsname}%
\let\auto@bib@innerbib\@empty
%</preamble>
\bibitem [{\citenamefont {M.l.Kaganov}\ \emph {et~al.}(1957)\citenamefont
  {M.l.Kaganov}, \citenamefont {I.M.Lifshitz},\ and\ \citenamefont
  {L.V.Tanatarov}}]{KLT57}%
  \BibitemOpen
  \bibfield  {author} {\bibinfo {author} {\bibnamefont {M.l.Kaganov}}, \bibinfo
  {author} {\bibnamefont {I.M.Lifshitz}}, \ and\ \bibinfo {author}
  {\bibnamefont {L.V.Tanatarov}},\ }\href
  {http://jetp.ras.ru/cgi-bin/e/index/e/4/2/p173?a=list} {\bibfield  {journal}
  {\bibinfo  {journal} {Sov. Phys. JETP}\ }\textbf {\bibinfo {volume} {4}},\
  \bibinfo {pages} {173} (\bibinfo {year} {1957})}\BibitemShut {NoStop}%
\bibitem [{\citenamefont {Kabanov}(2020)}]{K20}%
  \BibitemOpen
  \bibfield  {author} {\bibinfo {author} {\bibfnamefont {V.~V.}\ \bibnamefont
  {Kabanov}},\ }\href {\doibase 10.1063/10.0000874} {\bibfield  {journal}
  {\bibinfo  {journal} {Low Temperature Physics}\ }\textbf {\bibinfo {volume}
  {46}},\ \bibinfo {pages} {414} (\bibinfo {year} {2020})},\ \Eprint
  {http://arxiv.org/abs/https://doi.org/10.1063/10.0000874}
  {https://doi.org/10.1063/10.0000874} \BibitemShut {NoStop}%
\bibitem [{\citenamefont {S.I.Anisimov}\ \emph {et~al.}(1974)\citenamefont
  {S.I.Anisimov}, \citenamefont {B.L.Kapeliovich},\ and\ \citenamefont
  {T.L.Perel'man}}]{AKP74}%
  \BibitemOpen
  \bibfield  {author} {\bibinfo {author} {\bibnamefont {S.I.Anisimov}},
  \bibinfo {author} {\bibnamefont {B.L.Kapeliovich}}, \ and\ \bibinfo {author}
  {\bibnamefont {T.L.Perel'man}},\ }\href@noop {} {\bibfield  {journal}
  {\bibinfo  {journal} {Zh. Eksp. Teor. Fiz}\ }\textbf {\bibinfo {volume}
  {66}},\ \bibinfo {pages} {776} (\bibinfo {year} {1974})}\BibitemShut
  {NoStop}%
\bibitem [{\citenamefont {Allen}(1987)}]{A87}%
  \BibitemOpen
  \bibfield  {author} {\bibinfo {author} {\bibfnamefont {P.~B.}\ \bibnamefont
  {Allen}},\ }\href {\doibase 10.1103/PhysRevLett.59.1460} {\bibfield
  {journal} {\bibinfo  {journal} {Phys. Rev. Lett.}\ }\textbf {\bibinfo
  {volume} {59}},\ \bibinfo {pages} {1460} (\bibinfo {year}
  {1987})}\BibitemShut {NoStop}%
\bibitem [{\citenamefont {Brorson}\ \emph {et~al.}(1990)\citenamefont
  {Brorson}, \citenamefont {Kazeroonian}, \citenamefont {Moodera},
  \citenamefont {Face}, \citenamefont {Cheng}, \citenamefont {Ippen},
  \citenamefont {Dresselhaus},\ and\ \citenamefont {Dresselhaus}}]{BKMFCID90}%
  \BibitemOpen
  \bibfield  {author} {\bibinfo {author} {\bibfnamefont {S.~D.}\ \bibnamefont
  {Brorson}}, \bibinfo {author} {\bibfnamefont {A.}~\bibnamefont
  {Kazeroonian}}, \bibinfo {author} {\bibfnamefont {J.~S.}\ \bibnamefont
  {Moodera}}, \bibinfo {author} {\bibfnamefont {D.~W.}\ \bibnamefont {Face}},
  \bibinfo {author} {\bibfnamefont {T.~K.}\ \bibnamefont {Cheng}}, \bibinfo
  {author} {\bibfnamefont {E.~P.}\ \bibnamefont {Ippen}}, \bibinfo {author}
  {\bibfnamefont {M.~S.}\ \bibnamefont {Dresselhaus}}, \ and\ \bibinfo {author}
  {\bibfnamefont {G.}~\bibnamefont {Dresselhaus}},\ }\href {\doibase
  10.1103/PhysRevLett.64.2172} {\bibfield  {journal} {\bibinfo  {journal}
  {Phys. Rev. Lett.}\ }\textbf {\bibinfo {volume} {64}},\ \bibinfo {pages}
  {2172} (\bibinfo {year} {1990})}\BibitemShut {NoStop}%
\bibitem [{\citenamefont {Chekalin}\ \emph {et~al.}(1991)\citenamefont
  {Chekalin}, \citenamefont {Farztdinov}, \citenamefont {Golovlyov},
  \citenamefont {Letokhov}, \citenamefont {Lozovik}, \citenamefont {Matveets},\
  and\ \citenamefont {Stepanov}}]{CFGLLMYS91}%
  \BibitemOpen
  \bibfield  {author} {\bibinfo {author} {\bibfnamefont {S.~V.}\ \bibnamefont
  {Chekalin}}, \bibinfo {author} {\bibfnamefont {V.~M.}\ \bibnamefont
  {Farztdinov}}, \bibinfo {author} {\bibfnamefont {V.~V.}\ \bibnamefont
  {Golovlyov}}, \bibinfo {author} {\bibfnamefont {V.~S.}\ \bibnamefont
  {Letokhov}}, \bibinfo {author} {\bibfnamefont {Y.~E.}\ \bibnamefont
  {Lozovik}}, \bibinfo {author} {\bibfnamefont {Y.~A.}\ \bibnamefont
  {Matveets}}, \ and\ \bibinfo {author} {\bibfnamefont {A.~G.}\ \bibnamefont
  {Stepanov}},\ }\href {\doibase 10.1103/PhysRevLett.67.3860} {\bibfield
  {journal} {\bibinfo  {journal} {Phys. Rev. Lett.}\ }\textbf {\bibinfo
  {volume} {67}},\ \bibinfo {pages} {3860} (\bibinfo {year}
  {1991})}\BibitemShut {NoStop}%
\bibitem [{\citenamefont {Chichkov}\ \emph {et~al.}(1996)\citenamefont
  {Chichkov}, \citenamefont {Momma}, \citenamefont {Nolte}, \citenamefont {von
  Alvensleben},\ and\ \citenamefont {T{\"u}nnermann}}]{CMNvAT96}%
  \BibitemOpen
  \bibfield  {author} {\bibinfo {author} {\bibfnamefont {B.~N.}\ \bibnamefont
  {Chichkov}}, \bibinfo {author} {\bibfnamefont {C.}~\bibnamefont {Momma}},
  \bibinfo {author} {\bibfnamefont {S.}~\bibnamefont {Nolte}}, \bibinfo
  {author} {\bibfnamefont {F.}~\bibnamefont {von Alvensleben}}, \ and\ \bibinfo
  {author} {\bibfnamefont {A.}~\bibnamefont {T{\"u}nnermann}},\ }\href@noop {}
  {\bibfield  {journal} {\bibinfo  {journal} {Applied Physics A}\ }\textbf
  {\bibinfo {volume} {63}},\ \bibinfo {pages} {109} (\bibinfo {year}
  {1996})}\BibitemShut {NoStop}%
\bibitem [{\citenamefont {Nolte}\ \emph {et~al.}(1997)\citenamefont {Nolte},
  \citenamefont {Momma}, \citenamefont {Jacobs}, \citenamefont
  {T\"{u}nnermann}, \citenamefont {Chichkov}, \citenamefont {Wellegehausen},\
  and\ \citenamefont {Welling}}]{NMJTCWW97}%
  \BibitemOpen
  \bibfield  {author} {\bibinfo {author} {\bibfnamefont {S.}~\bibnamefont
  {Nolte}}, \bibinfo {author} {\bibfnamefont {C.}~\bibnamefont {Momma}},
  \bibinfo {author} {\bibfnamefont {H.}~\bibnamefont {Jacobs}}, \bibinfo
  {author} {\bibfnamefont {A.}~\bibnamefont {T\"{u}nnermann}}, \bibinfo
  {author} {\bibfnamefont {B.~N.}\ \bibnamefont {Chichkov}}, \bibinfo {author}
  {\bibfnamefont {B.}~\bibnamefont {Wellegehausen}}, \ and\ \bibinfo {author}
  {\bibfnamefont {H.}~\bibnamefont {Welling}},\ }\href {\doibase
  10.1364/JOSAB.14.002716} {\bibfield  {journal} {\bibinfo  {journal} {J. Opt.
  Soc. Am. B}\ }\textbf {\bibinfo {volume} {14}},\ \bibinfo {pages} {2716}
  (\bibinfo {year} {1997})}\BibitemShut {NoStop}%
\bibitem [{\citenamefont {Groeneveld}\ \emph {et~al.}(1992)\citenamefont
  {Groeneveld}, \citenamefont {Sprik},\ and\ \citenamefont
  {Lagendijk}}]{GSL92}%
  \BibitemOpen
  \bibfield  {author} {\bibinfo {author} {\bibfnamefont {R.~H.~M.}\
  \bibnamefont {Groeneveld}}, \bibinfo {author} {\bibfnamefont
  {R.}~\bibnamefont {Sprik}}, \ and\ \bibinfo {author} {\bibfnamefont
  {A.}~\bibnamefont {Lagendijk}},\ }\href {\doibase 10.1103/PhysRevB.45.5079}
  {\bibfield  {journal} {\bibinfo  {journal} {Phys. Rev. B}\ }\textbf {\bibinfo
  {volume} {45}},\ \bibinfo {pages} {5079} (\bibinfo {year}
  {1992})}\BibitemShut {NoStop}%
\bibitem [{\citenamefont {Groeneveld}\ \emph {et~al.}(1995)\citenamefont
  {Groeneveld}, \citenamefont {Sprik},\ and\ \citenamefont
  {Lagendijk}}]{GSL95}%
  \BibitemOpen
  \bibfield  {author} {\bibinfo {author} {\bibfnamefont {R.~H.~M.}\
  \bibnamefont {Groeneveld}}, \bibinfo {author} {\bibfnamefont
  {R.}~\bibnamefont {Sprik}}, \ and\ \bibinfo {author} {\bibfnamefont
  {A.}~\bibnamefont {Lagendijk}},\ }\href {\doibase 10.1103/PhysRevB.51.11433}
  {\bibfield  {journal} {\bibinfo  {journal} {Phys. Rev. B}\ }\textbf {\bibinfo
  {volume} {51}},\ \bibinfo {pages} {11433} (\bibinfo {year}
  {1995})}\BibitemShut {NoStop}%
\bibitem [{\citenamefont {Baranov}\ and\ \citenamefont {Kabanov}(2014)}]{BK14}%
  \BibitemOpen
  \bibfield  {author} {\bibinfo {author} {\bibfnamefont {V.~V.}\ \bibnamefont
  {Baranov}}\ and\ \bibinfo {author} {\bibfnamefont {V.~V.}\ \bibnamefont
  {Kabanov}},\ }\href {\doibase 10.1103/PhysRevB.89.125102} {\bibfield
  {journal} {\bibinfo  {journal} {Phys. Rev. B}\ }\textbf {\bibinfo {volume}
  {89}},\ \bibinfo {pages} {125102} (\bibinfo {year} {2014})}\BibitemShut
  {NoStop}%
\bibitem [{\citenamefont {Del~Fatti}\ \emph {et~al.}(2000)\citenamefont
  {Del~Fatti}, \citenamefont {Voisin}, \citenamefont {Achermann}, \citenamefont
  {Tzortzakis}, \citenamefont {Christofilos},\ and\ \citenamefont
  {Vall\'ee}}]{FVATCV00}%
  \BibitemOpen
  \bibfield  {author} {\bibinfo {author} {\bibfnamefont {N.}~\bibnamefont
  {Del~Fatti}}, \bibinfo {author} {\bibfnamefont {C.}~\bibnamefont {Voisin}},
  \bibinfo {author} {\bibfnamefont {M.}~\bibnamefont {Achermann}}, \bibinfo
  {author} {\bibfnamefont {S.}~\bibnamefont {Tzortzakis}}, \bibinfo {author}
  {\bibfnamefont {D.}~\bibnamefont {Christofilos}}, \ and\ \bibinfo {author}
  {\bibfnamefont {F.}~\bibnamefont {Vall\'ee}},\ }\href {\doibase
  10.1103/PhysRevB.61.16956} {\bibfield  {journal} {\bibinfo  {journal} {Phys.
  Rev. B}\ }\textbf {\bibinfo {volume} {61}},\ \bibinfo {pages} {16956}
  (\bibinfo {year} {2000})}\BibitemShut {NoStop}%
\bibitem [{\citenamefont {Pietanza}\ \emph {et~al.}(2004)\citenamefont
  {Pietanza}, \citenamefont {Colonna}, \citenamefont {Longo},\ and\
  \citenamefont {Capitelli}}]{PCLC04}%
  \BibitemOpen
  \bibfield  {author} {\bibinfo {author} {\bibfnamefont {L.}~\bibnamefont
  {Pietanza}}, \bibinfo {author} {\bibfnamefont {G.}~\bibnamefont {Colonna}},
  \bibinfo {author} {\bibfnamefont {S.}~\bibnamefont {Longo}}, \ and\ \bibinfo
  {author} {\bibfnamefont {M.}~\bibnamefont {Capitelli}},\ }\href {\doibase
  https://doi.org/10.1016/j.tsf.2003.11.120} {\bibfield  {journal} {\bibinfo
  {journal} {Thin Solid Films}\ }\textbf {\bibinfo {volume} {453-454}},\
  \bibinfo {pages} {506} (\bibinfo {year} {2004})},\ \bibinfo {note}
  {proceedings of Symposium H on Photonic Processing of Surfaces, Thin Films
  and Devices, of the E-MRS 2003 Spring Conference}\BibitemShut {NoStop}%
\bibitem [{\citenamefont {Pietanza}\ \emph {et~al.}(2007)\citenamefont
  {Pietanza}, \citenamefont {Colonna}, \citenamefont {Longo},\ and\
  \citenamefont {Capitelli}}]{PCLC07}%
  \BibitemOpen
  \bibfield  {author} {\bibinfo {author} {\bibfnamefont {L.~D.}\ \bibnamefont
  {Pietanza}}, \bibinfo {author} {\bibfnamefont {G.}~\bibnamefont {Colonna}},
  \bibinfo {author} {\bibfnamefont {S.}~\bibnamefont {Longo}}, \ and\ \bibinfo
  {author} {\bibfnamefont {M.}~\bibnamefont {Capitelli}},\ }\href@noop {}
  {\bibfield  {journal} {\bibinfo  {journal} {The European Physical Journal D}\
  }\textbf {\bibinfo {volume} {45}},\ \bibinfo {pages} {369} (\bibinfo {year}
  {2007})}\BibitemShut {NoStop}%
\bibitem [{\citenamefont {Kabanov}\ and\ \citenamefont
  {Alexandrov}(2008)}]{KA08}%
  \BibitemOpen
  \bibfield  {author} {\bibinfo {author} {\bibfnamefont {V.~V.}\ \bibnamefont
  {Kabanov}}\ and\ \bibinfo {author} {\bibfnamefont {A.~S.}\ \bibnamefont
  {Alexandrov}},\ }\href {\doibase 10.1103/PhysRevB.78.174514} {\bibfield
  {journal} {\bibinfo  {journal} {Phys. Rev. B}\ }\textbf {\bibinfo {volume}
  {78}},\ \bibinfo {pages} {174514} (\bibinfo {year} {2008})}\BibitemShut
  {NoStop}%
\bibitem [{\citenamefont {Mueller}\ and\ \citenamefont
  {Rethfeld}(2013)}]{MR13}%
  \BibitemOpen
  \bibfield  {author} {\bibinfo {author} {\bibfnamefont {B.~Y.}\ \bibnamefont
  {Mueller}}\ and\ \bibinfo {author} {\bibfnamefont {B.}~\bibnamefont
  {Rethfeld}},\ }\href {\doibase 10.1103/PhysRevB.87.035139} {\bibfield
  {journal} {\bibinfo  {journal} {Phys. Rev. B}\ }\textbf {\bibinfo {volume}
  {87}},\ \bibinfo {pages} {035139} (\bibinfo {year} {2013})}\BibitemShut
  {NoStop}%
\bibitem [{\citenamefont {Ivanov}\ \emph {et~al.}(2008)\citenamefont {Ivanov},
  \citenamefont {Rethfeld}, \citenamefont {O'Connor}, \citenamefont {Glynn},
  \citenamefont {Volkov},\ and\ \citenamefont {Zhigilei}}]{IROGVZ08}%
  \BibitemOpen
  \bibfield  {author} {\bibinfo {author} {\bibfnamefont {D.~S.}\ \bibnamefont
  {Ivanov}}, \bibinfo {author} {\bibfnamefont {B.}~\bibnamefont {Rethfeld}},
  \bibinfo {author} {\bibfnamefont {G.~M.}\ \bibnamefont {O'Connor}}, \bibinfo
  {author} {\bibfnamefont {T.~J.}\ \bibnamefont {Glynn}}, \bibinfo {author}
  {\bibfnamefont {A.~N.}\ \bibnamefont {Volkov}}, \ and\ \bibinfo {author}
  {\bibfnamefont {L.~V.}\ \bibnamefont {Zhigilei}},\ }\href@noop {} {\bibfield
  {journal} {\bibinfo  {journal} {Applied Physics A}\ }\textbf {\bibinfo
  {volume} {92}},\ \bibinfo {pages} {791} (\bibinfo {year} {2008})}\BibitemShut
  {NoStop}%
\bibitem [{\citenamefont {Ivanov}\ \emph {et~al.}(2013)\citenamefont {Ivanov},
  \citenamefont {Kuznetsov}, \citenamefont {Lipp}, \citenamefont {Rethfeld},
  \citenamefont {Chichkov}, \citenamefont {Garcia},\ and\ \citenamefont
  {Schulz}}]{IKLRCGS13}%
  \BibitemOpen
  \bibfield  {author} {\bibinfo {author} {\bibfnamefont {D.~S.}\ \bibnamefont
  {Ivanov}}, \bibinfo {author} {\bibfnamefont {A.~I.}\ \bibnamefont
  {Kuznetsov}}, \bibinfo {author} {\bibfnamefont {V.~P.}\ \bibnamefont {Lipp}},
  \bibinfo {author} {\bibfnamefont {B.}~\bibnamefont {Rethfeld}}, \bibinfo
  {author} {\bibfnamefont {B.~N.}\ \bibnamefont {Chichkov}}, \bibinfo {author}
  {\bibfnamefont {M.~E.}\ \bibnamefont {Garcia}}, \ and\ \bibinfo {author}
  {\bibfnamefont {W.}~\bibnamefont {Schulz}},\ }\href@noop {} {\bibfield
  {journal} {\bibinfo  {journal} {Applied Physics A}\ }\textbf {\bibinfo
  {volume} {111}},\ \bibinfo {pages} {675} (\bibinfo {year}
  {2013})}\BibitemShut {NoStop}%
\bibitem [{\citenamefont {Ivanov}\ \emph {et~al.}(2017)\citenamefont {Ivanov},
  \citenamefont {Blumenstein}, \citenamefont {Ihlemann}, \citenamefont {Simon},
  \citenamefont {Garcia},\ and\ \citenamefont {Rethfeld}}]{IBISGR17}%
  \BibitemOpen
  \bibfield  {author} {\bibinfo {author} {\bibfnamefont {D.~S.}\ \bibnamefont
  {Ivanov}}, \bibinfo {author} {\bibfnamefont {A.}~\bibnamefont {Blumenstein}},
  \bibinfo {author} {\bibfnamefont {J.}~\bibnamefont {Ihlemann}}, \bibinfo
  {author} {\bibfnamefont {P.}~\bibnamefont {Simon}}, \bibinfo {author}
  {\bibfnamefont {M.~E.}\ \bibnamefont {Garcia}}, \ and\ \bibinfo {author}
  {\bibfnamefont {B.}~\bibnamefont {Rethfeld}},\ }\href {\doibase
  10.1007/s00339-017-1372-9} {\bibfield  {journal} {\bibinfo  {journal}
  {Applied Physics A}\ }\textbf {\bibinfo {volume} {123}},\ \bibinfo {pages}
  {744} (\bibinfo {year} {2017})}\BibitemShut {NoStop}%
\bibitem [{\citenamefont {Rethfeld}\ \emph {et~al.}(2017)\citenamefont
  {Rethfeld}, \citenamefont {Ivanov}, \citenamefont {Garcia},\ and\
  \citenamefont {Anisimov}}]{RIGA17}%
  \BibitemOpen
  \bibfield  {author} {\bibinfo {author} {\bibfnamefont {B.}~\bibnamefont
  {Rethfeld}}, \bibinfo {author} {\bibfnamefont {D.~S.}\ \bibnamefont
  {Ivanov}}, \bibinfo {author} {\bibfnamefont {M.~E.}\ \bibnamefont {Garcia}},
  \ and\ \bibinfo {author} {\bibfnamefont {S.~I.}\ \bibnamefont {Anisimov}},\
  }\href {\doibase 10.1088/1361-6463/50/19/193001} {\bibfield  {journal}
  {\bibinfo  {journal} {Journal of Physics D: Applied Physics}\ }\textbf
  {\bibinfo {volume} {50}},\ \bibinfo {pages} {193001} (\bibinfo {year}
  {2017})}\BibitemShut {NoStop}%
\bibitem [{\citenamefont {Waldecker}\ \emph {et~al.}(2016)\citenamefont
  {Waldecker}, \citenamefont {Bertoni}, \citenamefont {Ernstorfer},\ and\
  \citenamefont {Vorberger}}]{WBE16}%
  \BibitemOpen
  \bibfield  {author} {\bibinfo {author} {\bibfnamefont {L.}~\bibnamefont
  {Waldecker}}, \bibinfo {author} {\bibfnamefont {R.}~\bibnamefont {Bertoni}},
  \bibinfo {author} {\bibfnamefont {R.}~\bibnamefont {Ernstorfer}}, \ and\
  \bibinfo {author} {\bibfnamefont {J.}~\bibnamefont {Vorberger}},\ }\href
  {\doibase 10.1103/PhysRevX.6.021003} {\bibfield  {journal} {\bibinfo
  {journal} {Phys. Rev. X}\ }\textbf {\bibinfo {volume} {6}},\ \bibinfo {pages}
  {021003} (\bibinfo {year} {2016})}\BibitemShut {NoStop}%
\bibitem [{\citenamefont {Buc^^c4^^83}\ \emph {et~al.}(2020)\citenamefont
  {Buc^^c4^^83}, \citenamefont {Oane}, \citenamefont {Mih^^c4^^83ilescu},
  \citenamefont {Mahmood}, \citenamefont {Sava},\ and\ \citenamefont
  {Ristoscu}}]{BOMMSR20}%
  \BibitemOpen
  \bibfield  {author} {\bibinfo {author} {\bibfnamefont {A.~M.}\ \bibnamefont
  {Buc^^c4^^83}}, \bibinfo {author} {\bibfnamefont {M.}~\bibnamefont {Oane}},
  \bibinfo {author} {\bibfnamefont {I.~N.}\ \bibnamefont {Mih^^c4^^83ilescu}},
  \bibinfo {author} {\bibfnamefont {M.~A.}\ \bibnamefont {Mahmood}}, \bibinfo
  {author} {\bibfnamefont {B.~A.}\ \bibnamefont {Sava}}, \ and\ \bibinfo
  {author} {\bibfnamefont {C.}~\bibnamefont {Ristoscu}},\ }\href {\doibase
  10.3390/nano10071319} {\bibfield  {journal} {\bibinfo  {journal}
  {Nanomaterials}\ }\textbf {\bibinfo {volume} {10}} (\bibinfo {year} {2020}),\
  10.3390/nano10071319}\BibitemShut {NoStop}%
\bibitem [{\citenamefont {Beaurepaire}\ \emph {et~al.}(1996)\citenamefont
  {Beaurepaire}, \citenamefont {Merle}, \citenamefont {Daunois},\ and\
  \citenamefont {Bigot}}]{BMDB96}%
  \BibitemOpen
  \bibfield  {author} {\bibinfo {author} {\bibfnamefont {E.}~\bibnamefont
  {Beaurepaire}}, \bibinfo {author} {\bibfnamefont {J.-C.}\ \bibnamefont
  {Merle}}, \bibinfo {author} {\bibfnamefont {A.}~\bibnamefont {Daunois}}, \
  and\ \bibinfo {author} {\bibfnamefont {J.-Y.}\ \bibnamefont {Bigot}},\ }\href
  {\doibase 10.1103/PhysRevLett.76.4250} {\bibfield  {journal} {\bibinfo
  {journal} {Phys. Rev. Lett.}\ }\textbf {\bibinfo {volume} {76}},\ \bibinfo
  {pages} {4250} (\bibinfo {year} {1996})}\BibitemShut {NoStop}%
\bibitem [{\citenamefont {Kimling}\ \emph {et~al.}(2014)\citenamefont
  {Kimling}, \citenamefont {Kimling}, \citenamefont {Wilson}, \citenamefont
  {Hebler}, \citenamefont {Albrecht},\ and\ \citenamefont {Cahill}}]{KKWHAC14}%
  \BibitemOpen
  \bibfield  {author} {\bibinfo {author} {\bibfnamefont {J.}~\bibnamefont
  {Kimling}}, \bibinfo {author} {\bibfnamefont {J.}~\bibnamefont {Kimling}},
  \bibinfo {author} {\bibfnamefont {R.~B.}\ \bibnamefont {Wilson}}, \bibinfo
  {author} {\bibfnamefont {B.}~\bibnamefont {Hebler}}, \bibinfo {author}
  {\bibfnamefont {M.}~\bibnamefont {Albrecht}}, \ and\ \bibinfo {author}
  {\bibfnamefont {D.~G.}\ \bibnamefont {Cahill}},\ }\href {\doibase
  10.1103/PhysRevB.90.224408} {\bibfield  {journal} {\bibinfo  {journal} {Phys.
  Rev. B}\ }\textbf {\bibinfo {volume} {90}},\ \bibinfo {pages} {224408}
  (\bibinfo {year} {2014})}\BibitemShut {NoStop}%
\bibitem [{\citenamefont {Zahn}\ \emph {et~al.}(2021)\citenamefont {Zahn},
  \citenamefont {Jakobs}, \citenamefont {Windsor}, \citenamefont {Seiler},
  \citenamefont {Vasileiadis}, \citenamefont {Butcher}, \citenamefont {Qi},
  \citenamefont {Engel}, \citenamefont {Atxitia}, \citenamefont {Vorberger},\
  and\ \citenamefont {Ernstorfer}}]{ZJWS21}%
  \BibitemOpen
  \bibfield  {author} {\bibinfo {author} {\bibfnamefont {D.}~\bibnamefont
  {Zahn}}, \bibinfo {author} {\bibfnamefont {F.}~\bibnamefont {Jakobs}},
  \bibinfo {author} {\bibfnamefont {Y.~W.}\ \bibnamefont {Windsor}}, \bibinfo
  {author} {\bibfnamefont {H.}~\bibnamefont {Seiler}}, \bibinfo {author}
  {\bibfnamefont {T.}~\bibnamefont {Vasileiadis}}, \bibinfo {author}
  {\bibfnamefont {T.~A.}\ \bibnamefont {Butcher}}, \bibinfo {author}
  {\bibfnamefont {Y.}~\bibnamefont {Qi}}, \bibinfo {author} {\bibfnamefont
  {D.}~\bibnamefont {Engel}}, \bibinfo {author} {\bibfnamefont
  {U.}~\bibnamefont {Atxitia}}, \bibinfo {author} {\bibfnamefont
  {J.}~\bibnamefont {Vorberger}}, \ and\ \bibinfo {author} {\bibfnamefont
  {R.}~\bibnamefont {Ernstorfer}},\ }\href {\doibase
  10.1103/PhysRevResearch.3.023032} {\bibfield  {journal} {\bibinfo  {journal}
  {Phys. Rev. Research}\ }\textbf {\bibinfo {volume} {3}},\ \bibinfo {pages}
  {023032} (\bibinfo {year} {2021})}\BibitemShut {NoStop}%
\bibitem [{\citenamefont {Ogi}\ \emph {et~al.}(2016)\citenamefont {Ogi},
  \citenamefont {Ishihara}, \citenamefont {Ishida}, \citenamefont {Nagakubo},
  \citenamefont {Nakamura},\ and\ \citenamefont {Hirao}}]{OIINNH16}%
  \BibitemOpen
  \bibfield  {author} {\bibinfo {author} {\bibfnamefont {H.}~\bibnamefont
  {Ogi}}, \bibinfo {author} {\bibfnamefont {T.}~\bibnamefont {Ishihara}},
  \bibinfo {author} {\bibfnamefont {H.}~\bibnamefont {Ishida}}, \bibinfo
  {author} {\bibfnamefont {A.}~\bibnamefont {Nagakubo}}, \bibinfo {author}
  {\bibfnamefont {N.}~\bibnamefont {Nakamura}}, \ and\ \bibinfo {author}
  {\bibfnamefont {M.}~\bibnamefont {Hirao}},\ }\href {\doibase
  10.1103/PhysRevLett.117.195901} {\bibfield  {journal} {\bibinfo  {journal}
  {Phys. Rev. Lett.}\ }\textbf {\bibinfo {volume} {117}},\ \bibinfo {pages}
  {195901} (\bibinfo {year} {2016})}\BibitemShut {NoStop}%
\bibitem [{\citenamefont {Youssef}\ and\ \citenamefont
  {Alghamdi}(2020)}]{YA20}%
  \BibitemOpen
  \bibfield  {author} {\bibinfo {author} {\bibfnamefont {H.~M.}\ \bibnamefont
  {Youssef}}\ and\ \bibinfo {author} {\bibfnamefont {N.~A.}\ \bibnamefont
  {Alghamdi}},\ }\href {\doibase 10.1038/s41598-020-73086-0} {\bibfield
  {journal} {\bibinfo  {journal} {Scientific Reports}\ }\textbf {\bibinfo
  {volume} {10}},\ \bibinfo {pages} {15946} (\bibinfo {year}
  {2020})}\BibitemShut {NoStop}%
\bibitem [{\citenamefont {Wang}\ \emph {et~al.}(1994)\citenamefont {Wang},
  \citenamefont {Riffe}, \citenamefont {Lee},\ and\ \citenamefont
  {Downer}}]{WRLD94}%
  \BibitemOpen
  \bibfield  {author} {\bibinfo {author} {\bibfnamefont {X.~Y.}\ \bibnamefont
  {Wang}}, \bibinfo {author} {\bibfnamefont {D.~M.}\ \bibnamefont {Riffe}},
  \bibinfo {author} {\bibfnamefont {Y.-S.}\ \bibnamefont {Lee}}, \ and\
  \bibinfo {author} {\bibfnamefont {M.~C.}\ \bibnamefont {Downer}},\ }\href
  {\doibase 10.1103/PhysRevB.50.8016} {\bibfield  {journal} {\bibinfo
  {journal} {Phys. Rev. B}\ }\textbf {\bibinfo {volume} {50}},\ \bibinfo
  {pages} {8016} (\bibinfo {year} {1994})}\BibitemShut {NoStop}%
\bibitem [{\citenamefont {Jain}\ and\ \citenamefont {McGaughey}(2016)}]{JM16}%
  \BibitemOpen
  \bibfield  {author} {\bibinfo {author} {\bibfnamefont {A.}~\bibnamefont
  {Jain}}\ and\ \bibinfo {author} {\bibfnamefont {A.~J.~H.}\ \bibnamefont
  {McGaughey}},\ }\href {\doibase 10.1103/PhysRevB.93.081206} {\bibfield
  {journal} {\bibinfo  {journal} {Phys. Rev. B}\ }\textbf {\bibinfo {volume}
  {93}},\ \bibinfo {pages} {081206} (\bibinfo {year} {2016})}\BibitemShut
  {NoStop}%
\bibitem [{\citenamefont {Gardiner}\ and\ \citenamefont {Zoller}(2015)}]{GZ15}%
  \BibitemOpen
  \bibfield  {author} {\bibinfo {author} {\bibfnamefont {C.}~\bibnamefont
  {Gardiner}}\ and\ \bibinfo {author} {\bibfnamefont {P.}~\bibnamefont
  {Zoller}},\ }\href@noop {} {\bibfield  {journal} {\bibinfo  {journal}
  {Imperial College Press}\ } (\bibinfo {year} {2015})}\BibitemShut {NoStop}%
\bibitem [{\citenamefont {Lu}\ \emph {et~al.}(2018)\citenamefont {Lu},
  \citenamefont {Vallabhaneni}, \citenamefont {Cao},\ and\ \citenamefont
  {Ruan}}]{LVCR16}%
  \BibitemOpen
  \bibfield  {author} {\bibinfo {author} {\bibfnamefont {Z.}~\bibnamefont
  {Lu}}, \bibinfo {author} {\bibfnamefont {A.}~\bibnamefont {Vallabhaneni}},
  \bibinfo {author} {\bibfnamefont {B.}~\bibnamefont {Cao}}, \ and\ \bibinfo
  {author} {\bibfnamefont {X.}~\bibnamefont {Ruan}},\ }\href {\doibase
  10.1103/PhysRevB.98.134309} {\bibfield  {journal} {\bibinfo  {journal} {Phys.
  Rev. B}\ }\textbf {\bibinfo {volume} {98}},\ \bibinfo {pages} {134309}
  (\bibinfo {year} {2018})}\BibitemShut {NoStop}%
\bibitem [{\citenamefont {Gershgorin}(1931)}]{G31}%
  \BibitemOpen
  \bibfield  {author} {\bibinfo {author} {\bibfnamefont {S.}~\bibnamefont
  {Gershgorin}},\ }\href
  {http://www.mathnet.ru/php/archive.phtml?wshow=paper&jrnid=im&paperid=5235&option_lang=eng}
  {\bibfield  {journal} {\bibinfo  {journal} {Izv. Akad. Nauk SSSR}\ ,\
  \bibinfo {pages} {749}} (\bibinfo {year} {1931})}\BibitemShut {NoStop}%
\end{thebibliography}%


%merlin.mbs apsrev4-1.bst 2010-07-25 4.21a (PWD, AO, DPC) hacked
%Control: key (0)
%Control: author (8) initials jnrlst
%Control: editor formatted (1) identically to author
%Control: production of article title (-1) disabled
%Control: page (0) single
%Control: year (1) truncated
%Control: production of eprint (0) enabled
%
\end{document}